\colorlet{darkblue}{blue!70!black}
\colorlet{darkred}{red!70!black}
\tikzstyle{tangle}=[draw=white,double=darkblue,line width=.08cm,double distance=1.4pt]
\tikzstyle{tanglered}=[draw=white,double=darkred,line width=.08cm,double distance=1.4pt]
\tikzstyle{stangle}=[style=tangle,scale=0.75]
\tikzstyle{tanglea}=[style=tangle,>=stealth,->]
\tikzstyle{basic}=[draw=darkblue,fill=darkblue,semithick,scale=.75,>=stealth]
\definecolor{webgreen}{rgb}{0,.5,0}
\definecolor{webblue}{rgb}{0,0,.5}
\DeclareMathOperator{\tr}{Tr}
\numberwithin{equation}{section}
\newtheorem{theorem}{Theorem}
\newtheorem{lemma}[theorem]{Lemma}
\newtheorem{definition}{Definition}
\newcommand{\one}{\mathds 1}
\renewcommand{\vec}[1]{#1}
\newcommand{\D}{\mathsf{D}}
\newcommand{\Ind}{\textrm{Ind}}
\newcommand{\polylog}{\operatorname{polylog}}
\newcommand{\wreath}{\operatorname{wr}}
\newcommand{\remove}[1]{}
\newcommand{\Pl}{\textsl{Pl}}
\newcommand{\C}{\mathbb{C}}
\newcommand{\R}{\mathbb{R}}
\newcommand{\Z}{\mathbb{Z}}
\newcommand{\U}{\textsf{U}}
\newcommand{\ket}[1]{|#1\rangle}
\newcommand{\bra}[1]{\langle #1|}
\newcommand{\Hom}{\operatorname{Hom}}
\newcommand{\GL}{\textsf{GL}}
\newcommand{\End}{\textsf{End}}
\newcommand{\SU}{\textsf{SU}}
\newcommand{\BQP}{\mathsf{BQP}}
\newcommand{\BPP}{\mathsf{BPP}}
\newcommand{\sP}{\mathsf{\#P}}
\newcommand{\SBQP}{\mathsf{SBQP}}
\newcommand{\SBP}{\mathsf{SBP}}
\newcommand{\st}{\textsuperscript{1}}
\newcommand{\dg}{\textsuperscript{2}}
\newcommand{\revise}[1]{}
\begin{document}

\title{Quantum Fourier Transforms and the Complexity of \\Link Invariants for Quantum Doubles of Finite Groups}
\author{Hari Krovi\st \thanks{Email: hkrovi@bbn.com} } 
\author{Alexander Russell\dg \thanks{Email: acr@cse.uconn.edu}}
\affil[]{\st Raytheon BBN Technologies \\
	Cambridge, MA\\
	\vspace{0.1in}
	\dg University of Connecticut\\
	Storrs, CT}
\maketitle

\begin{abstract}
Knot and link invariants naturally arise from any braided Hopf algebra. We consider the computational complexity of the invariants arising from an elementary family of finite-dimensional Hopf algebras: \emph{quantum doubles of finite groups} (denoted $\D(G)$, for a group $G$). These induce a rich family of knot invariants and, additionally, are directly related to topological quantum computation.

Regarding algorithms for these invariants, we develop quantum circuits for the quantum Fourier transform over $\D(G)$; in general, we show that when one can uniformly and efficiently carry out the quantum Fourier transform over the centralizers $Z(g)$ of the elements of $G$, one can efficiently carry out the quantum Fourier transform over $\D(G)$. We apply these results to the symmetric groups to yield efficient circuits for the quantum Fourier transform over $\D(S_n)$. With such a Fourier transform, it is straightforward to obtain additive approximation algorithms for the related link invariant. 

As for hardness results, first we note that in contrast to those such as the Jones polynomial, where the images of the braid group representations are dense in the unitary group, the images of the representations arising from $\D(G)$ are finite. This important difference appears to be directly reflected in the complexity of these invariants. While additively approximating ``dense'' invariants is $\BQP$-complete and multiplicatively approximating them is $\sP$-complete, we show that certain $\D(G)$ invariants (such as $\D(A_n)$ invariants) are $\BPP$-hard to additively approximate, $\SBP$-hard to multiplicatively approximate, and $\sP$-hard to exactly evaluate. To show this, we prove that, for groups (such as $A_n$) which satisfy certain properties, the probability of success of any randomized computation can be approximated to within any $\epsilon$ by the plat closure.

Finally, we make partial progress on the question of simulating anyonic computation in groups uniformly as a function of the group size. In this direction, we provide efficient quantum circuits for the Clebsch-Gordan transform over $\D(G)$ for ``fluxon" irreps, i.e., irreps of $\D(G)$ characterized by a conjugacy class of $G$. For general irreps, i.e., those which are associated with a conjugacy class of $G$ \emph{and} an irrep of a centralizer, we present an efficient implementation under certain conditions such as when there is an efficient Clebsch-Gordan transform over the centralizers (this could be a hard problem for some groups). We remark that this also provides a simulation of certain anyonic models of quantum computation, even in circumstances where the group may have size exponential in the size of the circuit.
\end{abstract}

\section{Introduction}
Quantum computing using anyons was introduced by Kitaev in his seminal paper~\cite{Kitaev:Anyon}. He showed how to construct a \emph{fault tolerant} computer based on anyons where unitary transformations are performed by braiding the anyons with each other. The anyons considered in the paper arise from quantum doubles of  finite groups. Due to the topological nature of braiding, this model is called topological quantum computing. \citet{OgburnPreskill} showed how to perform \emph{universal} quantum computation in this model using simple groups. In particular, they showed how to simulate a gate set using the alternating group $A_5$. \citet{Mochon2} (see also~\cite{Mochon1}) showed that groups which are solvable but not nilpotent can also be used for universal quantum computation.

In a series of papers, Freedman, Kitaev, Larsen and Wang laid the mathematical foundations of topological quantum computing. In \cite{FKLW} and \cite{Freedman:SimulationTQFT}, it was shown that quantum computers based on the circuit model can simulate topological quantum field theories (TQFT) and hence that topological quantum computers can be simulated using the standard circuit model. This, implicitly, gives an algorithm to approximate the Jones polynomial of knots and links. Aharonov, Jones and Landau~\cite{AJL} later gave an explicit combinatorial algorithm to additively approximate the Jones polynomial. Wocjan and Yard~\cite{WY} also present an efficient algorithm which is based on irreducible representations of Hecke algebras (which give rise to representations of the braid group). In the other direction, Freedman, Kitaev and Wang~\cite{Freedman:modular} show that topological quantum computers based on certain TQFTs can simulate conventional quantum computers. This implies that approximating (additively) the Jones polynomial and plat closures of braids is $\BQP$-complete. In \cite{Freedman:modular}, $\BQP$-completeness was shown when the TQFT was defined at a fixed root of unity $k$.  However, the algorithm in \cite{AJL} works for an asymptotically growing $k$ as well. This gap was closed when \citet{AharonovArad} showed that even when $k$ grows asymptotically, the complexity of approximating the link invariants is $\BQP$-complete. All the hardness results are based on one central fact. The representations of the braid group in all these models have the property that their image is dense in the unitary group. In \cite{Kuperberg:density}, Kuperberg showed that this is the case in general: whenever the image of the link invariant is dense, additive approximation is $\BQP$-complete and multiplicative approximation is $\SBQP$-complete. Using Aaronson's result \cite{Aaronson:PostBQP}, multiplicative approximations are also $\sP$-hard.

However, very little is known about the complexity of non-dense invariants such as those arising from quantum doubles of finite groups. In this paper, we prove some analogous results for computation with anyons arising from quantum doubles of finite groups (denoted $\D(G)$). We provide algorithms and hardness results for approximating link invariants in this model. The main difference between the model in this paper and those referred to above is that the image of the representations of the braid group in this model is not dense (in fact, it is a finite group \cite{FiniteImage}). This means that none of the techniques developed for dense invariants automatically carry over to this model and we develop them in this paper. On the side of algorithms, we first give an efficient circuit for the Fourier transform over the regular representation of $\D(G)$. We then present quantum algorithms to approximate link invariants arising from irreps of $\D(G)$. For certain kinds of irreps, namely those associated with a conjugacy class of $G$ (``fluxon" irreps), we also give classical randomized algorithms to approximate link invariants. For general irreps, there may not exist efficient classical algorithms.

Then turning to hardness results, we show that additive approximations to link invariants for certain groups $G$ which satisfy certain properties are $\BPP$-hard and that multiplicative approximations are $\SBP$-hard and exact evaluations are $\sP$-hard. All the algorithms presented in the paper are efficient in $\log(|G|)$ as well as the number of crossings of the link, where $|G|$ is the size of the group. However, our hardness results require that the group be of constant size.

Finally, we address the question of simulating anyonic quantum computation. Here, we assume that $G$ is in a sequence of groups of asymptotically growing size (since if $G$ is a fixed group, simulation has already been shown \cite{PreskillNotes}). In order to perform an efficient simulation, one needs to perform an efficient Clebsch-Gordan transform. Here, we make partial progress on this issue. In particular, we give the Clebsch-Gordan transform for fluxon irreps of $\D(G)$. For general irreps i.e., those characterized by a conjugacy class of $G$ and an irrep of the centralizer, we give an efficient Clebsch-Gordan transform under certain conditions. If we can
\begin{enumerate}
\item perform efficient QFT and the Clebsch-Gordan transform over every $Z(g)$ and $Z(g)\cap Z(h)$ and,
\item block diagonalize irreps of centralizers restricted to intersections of centralizers. 
\end{enumerate}
To explain the last condition, note that if $\rho$ is an irrep of $Z(g)$, then when restricted to $Z(g)\cap Z(h)$ (a subgroup of $Z(g)$), it breaks up into irreps of the latter group. The last condition, then, says that we must be able to perform the transform that block diagonalizes $\rho$ into blocks of irreps of $Z(g)\cap Z(h)$. This (rather technical) condition can probably be removed. However, in certain groups it can be quite challenging to satisfy the above two conditions. Indeed, as pointed to us by an anonymous referee, it may be quite hard to even find the intersections $Z(g)\cap Z(h)$ in some groups. If the Fourier and Clebsch-Gordan transforms can be performed efficiently, then one can simulate anyonic quantum computation efficiently inside even exponentially large irreps of $\D(G)$ in the circuit model. 

This paper is organized as follows. In Section~\ref{QuantumDoubles}, we introduce the quantum doubles of finite groups in terms of Hopf algebras. Following this, in Section~\ref{Fourier}, we derive their irreducible representations and describe the action of the associated $R$-matrices, which are relevant for braiding. We then describe a quantum circuit for the Fourier transform over $\D(G)$, and show how its complexity relates to the quantum Fourier transform over subgroups of $G$. In Section~\ref{algorithms}, we use this Fourier transform to give quantum algorithms to additively approximate link invariants arising from irreps of $\D(G)$. We also give a classical randomized algorithm for fluxon irreps. Then in Section~\ref{complexity}, we show that additive approximations of these link invariants are $\BPP$-hard, multiplicative approximations are $\SBP$-hard, and exact evaluations are $\sP$-hard. In Section~\ref{QC:anyons} we address the question of simulation of anyonic quantum computation. In Section~\ref{sec:CG}, we describe quantum algorithms for the Clebsch-Gordan transform over $D(G)$. Finally, in Section~\ref{conclusions} we present our conclusions and some open problems. 

\paragraph{Related work on finite image representations of the braid group.}
Finite image representations have been considered in recent papers. In \cite{Rowell}, there is a discussion of two paradigms, one involving dense images of the braid group and the other involving finite images, along with conjectures on the complexity of link invariants. In \cite{RW}, Rowell and Wang show how to ``localize'' certain finite image representations of the braid group. In \cite{HNW2} (see also \cite{HNW1}), Hastings, Nayak and Wang show that link invariants coming from certain finite image representations of the braid group discussed in \cite{RW} are $\sP$ hard to evaluate exactly (and to evaluate a sufficiently good multiplicative approximation). In \cite{FRW}, the image of the Ising anyon representation has been studied.

\section{Quantum doubles of finite groups}\label{QuantumDoubles}
\subsection{Quantum doubles as Hopf algebras}\label{Hopf}

We recall the notion of Hopf algebra; more complete descriptions can be found in accounts by \citet{Kassel} and \citet{Majid}. A $\C$-vector space $A$ is a \emph{Hopf algebra} if it possesses consistent algebra and coalgebra structure augmented with an \emph{antipode} map that yields a natural notion of ``inversion.'' To be more precise, a Hopf algebra possesses the following structure: 
\begin{description}
\item[Algebra structure] A multiplication map $\mu:A \otimes A \rightarrow A$ and a unit map $\eta:\C \rightarrow A$ which satisfy
  \begin{description}
    \item[(Associativity)] $\mu(\mu\otimes\one)(a\otimes b\otimes c)=\mu(\one \otimes\mu)(a\otimes b\otimes c)$ for all $a,b,c\in A$, and
    \item[(Unit)] $\mu(a\otimes \one)=\mu(\one\otimes a)=a$, for all $a \in A$, where $\one$ is the unique element of $A$ for which $\eta(c) = c\one$ for $c \in \C$.
    \end{description} 
\item[Coalgebra structure] A comultiplication rule $\Delta:A\rightarrow A\otimes A$ and a counit $\epsilon:A\rightarrow \C$ which satisfy
\begin{description}
\item[(Coassociativity)] $(\one\otimes\Delta)\Delta=(\Delta\otimes\one)\Delta$, and 
\item[(Counit)] $(\one\otimes\epsilon)\Delta= (\epsilon\otimes \one)\Delta = \one$.
\end{description}
\item[Coherence and an antipode]  The algebra and coalgebra structure are related by the following two axioms:
\begin{description}
\item[(Coherence)] The comultiplication map $\Delta$ and the counit $\epsilon$ are algebra homomorphisms (where $A \otimes A$ is given the natural tensor product algebra structure) and $\eta(\epsilon(a)) = \epsilon(a) \one$.
\item[(Antipode)] An \emph{antipode} map $S: A \rightarrow A$ satisfying
 \[
 \mu(S\otimes\one)\Delta(a)=\mu(\one\otimes S)\Delta(a)=\epsilon(a)\one \,.
\]
\end{description}
\end{description}
We will denote $\mu(a\otimes b)$ as $ab$. It follows from the axioms above that the antipode map $S: H \rightarrow H$ is an antihomomorphism:
\begin{equation}
  \label{eq:S-antihomomorphism}
  S(a)S(b) = S(ba)\,.
\end{equation}
Furthermore, when $H$ is finite-dimensional (the only case we shall consider), $S$ is invertible as a linear operator.

One perspective on the role played by the coalgebra and antipode structure afforded by a Hopf algebra is that it provides ring structure to the family of representations of the underlying algebra. Specifically, the comultiplication operator provides the structure of a representation to the tensor product of two representations $X$ and $Y$ of $A$ by defining the action of $a \in A$ to be that of $\Delta(a)$. The coherence axiom $\Delta(a)\Delta(b) = \Delta(ab)$ guarantees that this yields representation structure; the coassociativity axiom yields a canonical isomorphism between the representation $(X \otimes Y) \otimes Z$ and $X \otimes (Y \otimes Z)$. With this notion of tensor product of representations, the counit axiom guarantees that the map $\epsilon: A \rightarrow \C$, a one-dimensional representation of $A$, is a unit under tensor product: $X \otimes \epsilon \cong X = \epsilon \otimes X$. Finally, the antipode map gives the structure of a representation to $\Hom(X, \C)$ for any representation $X$: the action of $a \in A$  on an element $f \in \Hom(X, \C)$ is given by $(af)(x) =  f(S(a)x)$. The fact that this definition defines a representation depends on~\eqref{eq:S-antihomomorphism}.

A finite group $G$ is naturally associated with two, generally distinct, Hopf algebras. The first is the group algebra, the algebra of formal $\C$-linear combinations of group elements denoted $\C[G] = \{ \sum_g \alpha_g g \mid g \in \C\}$. The second is the dual of this algebra, the set of maps from $G$ to $\C$. A natural basis for this space is given by the delta functions $g^*$, where $g^*(h)=\delta_{g,h}$. We denote this algebra $\C G$ (without brackets). 

Algebra structure on $\C[G]$ is given by linearly extending the group multiplication rule; Hopf algebra structure on $\C[G]$ is defined by adopting the maps
\[
\Delta(g)=g\otimes g, \quad \epsilon(g)=1 \quad \text{and} \quad S(g)=g^{-1} \,.
\]
Observe that these choices correspond to the familiar notions of tensor product of representations, the trivial representation, and representation of dual spaces. It is easy to check that when $G$ is non-abelian, this Hopf algebra is non-commutative. Its comultiplication structure, however, is \emph{cocommutative}, which is to say that
\begin{equation}
T^{-1}\Delta(g)T=\Delta(g)\,,
\end{equation}
where $T$ is the linear operator that swaps the two tensor copies $A\otimes A$.

Now consider the dual algebra $\C G$; the algebra structure is given by the rules
\[
g_1^*g_2^* = \delta_{g_1,g_2} \quad \text{and} \quad \one =  \sum_g g^*\,.
\]
The coalgebra structure is given by the maps
\[
\Delta(g^*)=\sum_{g_1g_2=g}g_2^*\otimes g_1^*, \quad \epsilon(g^*)=\delta_{g,e} \quad\text{and}\quad S(g^*)=(g^{-1})^* ,
\]
where $e$ is the identity element of $G$. For any nonabelian finite group $G$, $\C G$ is a commutative, non-cocommutative Hopf algebra.

With any finite-dimensional Hopf algebra $H$ one may associate a natural dual Hopf algebra $H^*$: each map $\phi: V \rightarrow W$ in the definition of $H$ yields a dual map $\phi^*: \Hom(W, \C) \rightarrow \Hom(V, \C)$. One can check that this process carries the algebra structure of $H$ to the coalgebra structure of $H^*$, the coalgebra structure of $H$ to the algebra structure of $H^*$, and that the resulting maps satisfy the axioms. The two algebras described above are duals of one another.

The \emph{quantum double} $\D(G)$ is an algebra defined on the vector space $\C[G] \otimes \C G$; we write the element $g \otimes h^*$ simply as $gh^*$, letting the superscript on the $h$ remind us that this is an element of the dual algebra. The Hopf algebra structure on $\D(G)$ is obtained by stitching together the $\C[G]$ and $\C G$ structures as follows:
\[
(g_1h_1^*)(g_2h_2^*) = g_1g_2(h_1^{g_2})^*h_2^* = \delta_{h_1^{g_2},h_2}g_1g_2h_2^* \quad \text{and} \quad \one =  e\sum_h h^* 
\]
where $x^y = y^{-1}xy$ denotes the conjugation of $x$ by $y$ and
\begin{equation}\label{eq:double-definition}
\Delta(gh^*)=\sum_{h_1h_2=h}gh_2^*\otimes gh_1^*, \quad \epsilon(gh^*)=\delta_{h,e} \quad\text{and}\quad S(gh^*)=(g^{-1})(gh^{-1}g^{-1})^* \,.
\end{equation}

\newcommand{\cop}{\operatorname{cop}}
\paragraph{Remarks.} If $H$ is a Hopf algebra with comultiplication rule $\Delta$, one obtains the \emph{coopposite} of this algebra, denoted $H^{\cop}$, by \emph{reversing} the comultiplication rule: specifically, the comultiplication structure of $H^{\cop}$ is given by $T \circ \Delta$, where $T: H \otimes H \rightarrow H \otimes H$ is the exchange operator (that linearly extends the rule $T: \alpha \otimes \beta \mapsto \beta \otimes \alpha$). All other algebraic structure is inherited from $H$. In fact, it is this reversed comultiplication structure of $\C G^{\cop}$ featured in the definition~\eqref{eq:double-definition} of the quantum double construction. $\D(G)$ is generated, as an algebra, by the elements $\{ \sum_h gh^* \mid g \in G\}$, an embedded copy of $\C[G]$, and $\{ 1h^* \mid h \in G\}$, an embedded copy of $\C G^{\cop}$.

If $G$ is non-commutative, then the quantum double $\D(G)$ is a non-commutative, non-cocommutative Hopf algebra. The construction is a rough analogue of the semidirect product construction of two groups: specifically, observe that one can interpret the multiplication rule
$$
(g_1h_1^*)(g_2h_2^*) = g_1 g_2 (h_1^{g_2})^* h_2^* = \delta_{h_1^{g_2},h_2}g_1g_2h_2^*
$$
as a consequence of the familiar ``commutation relation'' $h g = g (h^g)$ where, as above, we adopt the notation $h^g = ghg^{-1}$.
Physically, one can think of the elements of the group as creation operators and the elements of the dual as annihilation operators. 
The multiplication rule, then, expresses a natural commutation relation typically more exotic than the usual one for bosons ($aa^\dag=a^\dag a$) or fermions ($a a^\dag = -a^\dag a$).

\subsection{Quasi-triangularity and braiding}\label{sec:braiding}
 Quantum doubles of finite groups possess a further property of interest: they are \emph{quasi-triangular}
(or \emph{braided}). Specifically, there is an invertible element $R$ of $A \otimes A$ that imparts a ``near cocommutativity'' property on $\Delta$,
\begin{equation}
R\Delta(a)R^{-1}=T\Delta(a)\,,\label{eq:R-commutation}
\end{equation}
and satisfies the braiding relations on $\D(G)^{\otimes 3} = \D(G) \otimes \D(G) \otimes \D(G)$:
\begin{equation}
\label{eq:YB}
(\Delta\otimes \vec{1})(R)=R_{13}R_{23}\,,\qquad (\vec{1}\otimes\Delta)(R)=R_{13}R_{12}\,,
\end{equation}
where $R_{12}$ (and $R_{23}$, respectively) denotes the element $R$ acting on the first two (last two, respectively) tensor copies. (As above, $T: \alpha \otimes \beta \mapsto \beta \otimes \alpha$ is the exchange operator.)

The element $R$ is traditionally called an ``$R$-matrix,'' and immediately yields a solution to the Yang-Baxter equations: if we define $s_1=TR_{12}\otimes I$ (and likewise for $s_2$ and $s_3$), then it can also be shown from~\eqref{eq:YB} that
\begin{equation}\label{eq:double-braid}
s_1s_2s_1=s_2s_1s_2\,.
\end{equation}
In the algebra $\D(G)^{\otimes n}$, we generalize this notation by analogously defining $s_i$, for $1 \leq i < n$, as $I \otimes TR \otimes I$, the operator $TR$ acting on the $i$th and $i+1$st factors of $\D(G)$.

A generalization of the quantum double construction described above for finite groups can be applied to generic Hopf algebras (and their dual) to yield a quasi-triangular Hopf algebra with an explicit expression for the $R$ matrices. For the finite group double $\D(G)$, the $R$ matrix is given by
\[
R=\sum_g g\otimes g^* \in \D(G) \otimes \D(G)\,,
\]
where we embed $g\in G$ into $\D(G)$ as $\sum_h gh^*$ and $g^*$ as $eg^*$.

\paragraph{The braid group}
Recall that the braid group on $n$ strands, $B_n$, is an infinite, discrete group generated by the elements $\{ \sigma_1,  \ldots, \sigma_{n-1}\}$ with the following relations:
\begin{align}
\sigma_i\sigma_j &=\sigma_j\sigma_i &&\text{for  $|i-j|\geq 2$}\,, \nonumber \\
\sigma_i\sigma_{i+1}\sigma_i &=\sigma_{i+1}\sigma_i\sigma_{i+1} &&\text{for $i\in [1,n-1]$}\,. \label{eq:braid-YB}
\end{align}
With the natural topological realization of the braid group as equivalence classes of braids under ambient isotopy, the generator $\sigma_i$ can be naturally associated with the topological braid of Figure~\ref{fig:generator-topological} and the relations~\eqref{eq:braid-YB} correspond to the topological equivalence of Figure~\ref{fig:Yang-Baxter}.

\paragraph{Representations of the braid group via the quantum double} As the elements $s_i$ of~\eqref{eq:double-braid} above satisfy the relations~\eqref{eq:braid-YB}, it follows that if $\rho: \D(G) \rightarrow \GL(V)$ is a representation of $\D(G)$ we can obtain a representation $\tau = \rho^{\otimes n}$ of $B_n$ on $V^{\otimes n}$ by letting $\tau(\sigma_i)=\rho^{\otimes n}(s_i)$.

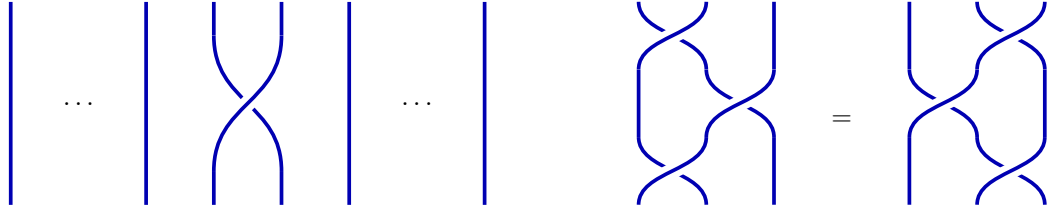
\begin{figure}
\begin{center}
\begin{subfigure}[b]{.45\textwidth}
\centering
\begin{tikzpicture}[style=stangle,scale=.6]

\draw[double] (-8,6) -- (-8,0);
\draw[double] (-6,3) node {$\ldots$};
\draw[double] (-4,6) -- (-4,0);
\draw[double] (-2,6) -- (-2,5);
\draw[double] (0,6) -- (0,5);
\draw[double] (-2,5) .. controls +(0,-2) and +(0,2) .. (0,1);
\draw[double] (0,5) .. controls +(0,-2) and +(0,2) .. (-2,1);
\draw[double] (-2,1) -- (-2,0);
\draw[double] (0,1) -- (0,0);
\draw[double] (2,6) -- (2,0);
\draw[double] (4,3) node {$\ldots$};
\draw[double] (6,6) -- (6,0);
\end{tikzpicture}
\smallskip
\caption{The topological realization of the element $\sigma_i$, involving strands $i$ and $i+1$.}
\label{fig:generator-topological}
\end{subfigure}
\quad
\begin{subfigure}[b]{.45\textwidth}
\centering
\begin{tikzpicture}[style=stangle,scale=.6]

\draw[double] (-2,4) .. controls (-2,3) and (0,3) .. (0,2);
\draw[double] (0,4) .. controls (0,3) and (-2,3) .. (-2,2);
\draw[double] (-2,2) -- (-2,0);

\draw[double] (0,2) .. controls (0,1) and (2,1) .. (2,0);
\draw[double] (2,2) .. controls (2,1) and (0,1) .. (0,0);
\draw[double] (2,4) -- (2,2);

\draw[double] (-2,0) .. controls (-2,-1) and (0,-1) .. (0,-2);
\draw[double] (0,0) .. controls (0,-1) and (-2,-1) .. (-2,-2);
\draw[double] (2,0) -- (2,-2);

\draw (4,0.5) node {$=$};


\draw[double] (8,4) .. controls (8,3) and (10,3) .. (10,2);
\draw[double] (10,4) .. controls (10,3) and (8,3) .. (8,2);
\draw[double] (6,4) -- (6,2);

\draw[double] (6,2) .. controls (6,1) and (8,1) .. (8,0);
\draw[double] (8,2) .. controls (8,1) and (6,1) .. (6,0);
\draw[double] (10,2) -- (10,0);

\draw[double] (8,0) .. controls (8,-1) and (10,-1) .. (10,-2);
\draw[double] (10,0) .. controls (10,-1) and (8,-1) .. (8,-2);
\draw[double] (6,0) -- (6,-2);
\end{tikzpicture}
\smallskip
\caption{The topological interpretation of the Yang-Baxter relations: $\sigma_i\sigma_{i+1}\sigma_i = \sigma_{i+1}\sigma_i\sigma_{i+1}$.}
\label{fig:Yang-Baxter}
\end{subfigure}
\caption{Topological realization of the braid group.}
\label{fig:braid-topological}
\end{center}
\end{figure}

\section{The Quantum Fourier transform over \texorpdfstring{$\D(G)$}{D(G)}}\label{Fourier}
\subsection{Background in representation theory}
We assume that the reader is familiar with the representation theory of finite groups and merely set down notation below, following~\citet{Serre1977linear}.

Let $G$ be a finite group. A representation $\rho$ of $G$ is a homomorphism $\rho: G \rightarrow \U(V)$, where $V$ is a finite-dimensional complex Hilbert space and $\U(V)$ denotes the unitary group on $V$. In this unitary case, it is convenient to work with a \emph{$G$-invariant} inner product, one for which $\langle \vec{u}, \vec{v}\rangle = \langle \rho(g) \vec{u}, \rho(g)\vec{v}\rangle$. By symmetrizing an arbitrary inner product $\langle \cdot, \cdot \rangle$ on $V$, one can define a new, $G$-invariant inner product by the rule
\[
\langle \vec{u}, \vec{v}\rangle' = \frac{1}{|G|} \sum_{g \in G} \langle \rho(g)\vec{u}, \rho(g)\vec{v}\rangle\,;
\]
thus we work under this assumption without loss of generality. 

To be concise, we will simply say that \emph{$V$ is a representation} of $G$ (omitting reference to the map $\rho$) and let $g\vec{v}$ denote $\rho(g) \vec{v}$, the linear action of an element $g \in G$ on a vector $\vec{v} \in V$. Two representations $V$ and $V'$ are isomorphic if there is a linear isomorphism $\phi: V \rightarrow V'$ for which $g\phi(\vec{v}) = \phi(g \vec{v})$ for all $g$ and $\vec{v}$. We say that a subspace $W$ of a representation $V$ is \emph{invariant} if $g\vec{w} \in W$ for all $g \in G$ and $\vec{w} \in W$. Such an invariant subspace is a representation itself under the restricted action. Of course, both $\{ 0\}$ and $V$ are invariant; if these are the only invariant subspaces, $V$ is \emph{irreducible}. When $V$ is not irreducible, there is a nontrivial invariant subspace $W \subset V$ and it is easy to check that $W^\perp$ is likewise invariant. This realizes $V$ as the direct sum of two, orthogonal invariant subspaces, each a representation of $G$. When a representation $\rho: G \rightarrow \U(V)$ can be decomposed in this way so that $V = W_1 \oplus W_2$, we write $\rho = \sigma_1 \oplus \sigma_2$, where $\sigma_i: G \rightarrow \U(W_i)$ is the restricted action on $W_i$. This process allows one to decompose any finite-dimensional representation into an orthogonal direct sum of irreducible representations. While this decomposition is not, in general, unique, the number of appearances of representations of each isomorphism class is determined uniquely.

For a finite group $G$, the vector space $\C[G]$ can be naturally given the structure of a representation by linearly extending the left multiplication map $g: h \mapsto gh$; this is a representation of degree $|G|$ called the \emph{regular} representation which we will alternately denote $\text{Reg}_G$. Likewise, the one-dimensional vector space $\C$ can be given the structure of a representation by linearly extending the rule $g: 1 \mapsto 1$; this is the \emph{trivial} representation. The regular representation $\C[G]$ has a remarkable, canonical decomposition into irreducible representations of $G$: each irreducible representation of $G$ appears with multiplicity equal to its dimension. We write
\begin{equation}
\label{eq:reg-decomp}
\C[G] = \bigoplus_{\rho} \rho^{\oplus d_\rho}\,,
\end{equation}
where this sum extended over all (isomorphism classes of) irreducible representations $\rho: G \rightarrow \U(V_\rho)$ and $d_\rho$ denotes $\dim V_\rho$. 

A (unitary) linear operator on $\C[G]$ that carries the basis $\{ g \mid g \in G\}$ to a basis consistent with the decomposition~\eqref{eq:reg-decomp} is called a \emph{Fourier transform} (over $G$). When convenient, we adopt the Dirac vector notation to avoid lengthy subscripts and write target basis for $\bigoplus_\rho \rho^{\oplus d_\rho}$ in the form $\ket{\rho, i, j}$, where $\rho$ is a representation of $G$, and $i, j \in \{1, \ldots, d_\rho\}$. Here we treat $j$ as indexing one of the $d_\rho$ copies of $\rho$ in $\C[G]$; thus, for each $\rho$ and $j$, the space spanned by $\{ \ket{\rho, i, j} \mid i \in \{1, \ldots, d_\rho\}\}$ is an irreducible subspace isomorphic to $\rho$.

Given a subgroup $H$ of $G$ and a representation $(\rho,V)$ of $H$, one can construct a representation of $G$ from $\rho$ by a process called \emph{induction} which plays an important role in the representation theory of $\D(G)$. In this subsection, we describe the construction in general. As a $G$ module, the induced representation can be neatly expressed as a tensor product (over the noncommutative ring $\C[H]$)  
\[
\C[G]\otimes_{\C[H]}V\,,
\]
where $\otimes_{\C[H]}$ indicates that this tensor product is $\C[H]$-linear, i.\,e., $gh\otimes \vec{v} = g\otimes \rho(h)\vec{v}$, for $h\in H$ and $\vec{v}\in V$. In order to give an explicit construction in terms of a basis, let us pick a transversal $T=\{t_1,t_2,\dots ,t_k\}$ for $H$ in $G$ (so that $k=|G|/|H|$) and fix a basis $\{\ket{v_1},\dots ,\ket{v_d}\}$ for the representation space $V$. The induced representation, which we denote $\uparrow_H^G\rho$ or simply $\uparrow^G\rho$ when $H$ is understood, acts on the vector space $\C[T] \otimes V$, where $\C[T] = \{ \sum_i a_i \ket{t_i} \mid a_i \in \C\}$ is the (Hilbert) space of formal $\C$-linear combinations of elements of $T$ with the inner product $\langle{t_i}\mid{t_j}\rangle = \delta_{ij}$. (The tensor product in this case is the conventional tensor product over $\C$.) In this basis, $(\uparrow_H^G \rho)(g)$ is given by linearly extending the rule
\begin{equation}
\label{eq:induced-action}
g: \ket{t} \otimes \ket{v} \mapsto \ket{t'} \otimes h\ket{v}
\end{equation}
where
$t' \in T$ and $h \in H$ are the unique elements for which $gt = t'h$.

There are two facts about induced representations which we use later. 
\begin{enumerate}
\item Induction is transitive: if $H\leq K\leq G$ then
  \begin{equation}
    [\uparrow_K^G (\uparrow_H^K \rho)] \cong \;\uparrow_H^G \rho\,.
  \end{equation}
\item The regular representation of $G$ is an induced representation from the trivial representation of the trivial subgroup $\{e\}$.
  \begin{equation}
    \text{Reg}_G=\uparrow_{\{e\}}^G \vec{1}\,.
  \end{equation}
\end{enumerate}
Combining the above two facts, we can write
\begin{equation}
  \text{Reg}_G= \uparrow_H^G (\uparrow_{\{e\}}^H \vec{1}) = \uparrow_H^G \text{Reg}_H\,.
\end{equation}

\subsection{\texorpdfstring{Irreducible representations and Fourier analysis over $\D(G)$}{Irreducible representations and Fourier analysis over D(G)}}

A \emph{representation} of a Hopf algebra $H$ is a representation of its algebra structure. Specifically, a representation $(\rho, V)$ of the Hopf algebra $H$ is a homomorphism
$$
\rho: H \rightarrow \End(V)\,.
$$
As above, the algebra $\D(G)$ can be given the structure of a representation using left-multiplication. The irreducible representations of $\D(G)$ have been worked out by \citet{DPR} in the physics literature and by \citet{Gould:qd-reps} in the mathematics literature. In the article~\cite{Gould:qd-reps}, Gould established that the algebra $\D(G)$ is semi-simple (when $G$ is a finite group), described the irreducible representations of $\D(G)$, and developed an associated character theory. In this section, we briefly describe his construction of the irreducible representations of $\D(G)$, indicate how the regular representation can be broken down into irreducible representations, and proceed to construct quantum circuits for the associated quantum Fourier transform.

To simplify the notation in what follows, we adopt Dirac notation for the algebras $\C[G]$ and $\C G$, treating them as spanned by the orthonormal bases $\{ \ket{g} \mid g \in G\}$ and $\{ \ket{h^*} \mid h \in G\}$, respectively. Writing $\ket{gh^\ast}$ as shorthand for the tensor product $\ket{g,h^\ast}$, the action of $\D(G)$ on the regular representation, spanned by the $\ket{gh^*}$, is
\begin{equation}
\label{eq:DG-action}
g^\prime (h^\prime)^\ast \ket{g,h^\ast} = \delta_{(h^\prime)^g,h}\ket{g^\prime g, h^\ast}\,.
\end{equation}
We see that $g' (h^\prime)^\ast$ takes $\ket{g,h^\ast}$ to zero unless $h$ and $h^\prime$ are conjugates (under the action of $g$); in the case where $g' (h^\prime)^\ast$ does not annihilate $\ket{g,h^\ast}$, it acts by the left $G$ action in the first coordinate. It follows immediately that each of the subspaces
\[
V_h=\text{span}(\{\ket{g,h^*},g\in G\})
\]
is invariant under the left action of $\D(G)$; this provides an initial orthogonal decomposition $\D(G) = \bigoplus_h V_h$ into invariant subspaces. Observe, also, that the action of $\sum_{h \in g} g h^\ast$ on $V_h$ is precisely left multiplication by $g$ on the first component; this gives $V_h$ the structure of a $G$-representation isomorphic to $\C[G]$. As the action of $\D(G)$ on $V_h$ is at least as rich as the action of $\C[G]$ on $V_h$, the irreducible subspaces of $V_h$ under the $\D(G)$ action are direct sums of irreducible subspaces under the $G$ action.\footnote{The map $g \mapsto \sum_h gh^*$ is a one-to-one algebra homomorphism from $\C[G]$ into $\D(G)$; thus any $\D(G)$ representation can be given the structure of a $G$ representation.} Remarkably, we will see that these representations arise directly as induced representations from the centralizer subgroup $Z(h) = \{ g \in G \mid gh = hg\}$.

Fixing an $h \in G$, let $T_h$ be a transversal for $Z(h)$ in $G$; then we may write each basis vector of $V_h$ uniquely:
\[
\ket{g,h^*}=\ket{tz , h^*}\,,
\]
where $t\in T_h$ and $z\in Z(h)$. In particular, we have the straightforward isometric isomorphism $\C[G] \equiv \C[T_h] \otimes \C[Z(h)]$ given by the map $\ket{g} \mapsto \ket{t} \otimes \ket{z}$. With this isomorphism, we shall work with the basis 
\begin{equation}
\label{eq:coset-basis}
\bigl\{ \ket{t, z, h^*} \bigr\}
\end{equation}
 (for $V_h$). Recall from the discussion of induction, above, that $\C[G] \cong\; \uparrow_{Z(h)}^G\text{Reg}_{Z(h)}$. Indeed, this $\ket{t, z, h^*}$ basis is precisely the basis described in the explicit formulation of induction given in~\eqref{eq:induced-action}.

It is easy to check that induction commutes with decomposition: when $H$ is a subgroup of $G$ and $\rho$ and $\tau$ are two $H$-representations we have a natural isomorphism between the two $G$-representations $\uparrow_H^G (\rho \oplus \tau)$ and $\uparrow_H^G \rho \;\oplus \uparrow_H^G \tau$. According to~\eqref{eq:reg-decomp}, the group algebra $\C[Z(h)]$ decomposes into the orthogonal sum $\bigoplus_{\rho} \rho^{\oplus d_\rho}$, in which each irreducible representation $\rho$ of $Z(h)$ appears $d_\rho = \dim \rho$ times. We conclude that $\C[G] = \bigoplus_\rho \uparrow_{Z_h}^G \rho$, this sum extended over all irreducible representations of $Z(h)$. We remark that when $\rho$ is an irreducible representation of $Z(h)$, the representation $\uparrow_{Z(h)}^G \rho$ is not, in general, irreducible as a $G$ representation. We shall see, however, that the natural $\D(G)$ action on $\uparrow_{Z(h)}^G \rho$ is irreducible; in fact, all irreducible $\D(G)$ representations arise in this way. Carrying out the Fourier transform on the component corresponding to $Z(h)$, we finally arrive at the basis
\(\ket{t,\rho,i,j,h^\ast}\); for convenience, we reorder the components and work with the basis
\begin{equation}
\bigl\{\ket{h^\ast,\rho,j,t,i}\bigr\}\,.
\label{eq:DG-F-basis}
\end{equation}
For a given $\rho$, $h$, and $j$, the vectors $\ket{h^\ast,\rho,j,t,i}$ span an induced representation of $\rho$ to $G$; we denote this space as $V_{h,\rho,j}$. The action of $G$ on these vectors is that of $\uparrow_{Z(h)}^G \rho$; in particular, $G$ leaves this space invariant. Though the space is not irreducible, in general, under the $G$ action, it is under the richer action afforded by $\D(G)$. In light of~\eqref{eq:DG-action}, the action of any $(h^\prime)^\ast$ preserves this space; specifically,
\begin{itemize}
\item
if $h^\prime$ is not a conjugate of $h$, it annihilates $V_h$ and hence $V_{h, \rho, j}$;
\item if $h'$ is conjugate to $h$, the action of $h^*$ on the space $V_{h}$ is the projection onto the span of
\[
\{ \ket{g, h^*} \mid h' = h^g\}\,.
\]
Note, however, that $h^{zg} = h^g$ for any $z \in Z(h)$ and hence $\{ g \mid h' = g^h\}$ is a left coset of $Z(h)$. Evidently, there is an element $t' \in T_h$ for which $h^* \ket{t, z, h^*} = \delta_{t, t'} \ket{t, z, h^*}$ (in the basis of~\eqref{eq:coset-basis} above) and we conclude that on the space $V_{h, \rho, j}$ the action of $h^*$ projects onto the span of the vectors
\[
\{ \ket{h, \rho, j, t', i} \mid i \in \{ 1, \ldots, d_\rho\}\}\,.
\]
We let $V_{h, \rho, j}^{t'}$ denote this subspace.
\end{itemize}
Note that $V_{h, \rho, j} = \bigoplus_{t \in T_h} V_{h, \rho, j}^t$, an orthogonal direct sum. To see that this space is irreducible under the $\D(G)$ action, consider a $\D(G)$-invariant subspace $W \subset V_{h, \rho, j}$. As described above, the $\D(G)$ action contains the orthogonal projection operators onto each $V_{h, \rho, j}^t$; it follows that 
\[
W = \bigoplus_{t \in T_h} W_t\quad \text{where each $W_t = W \cap V_{h, \rho, j}^t$.} 
\]
Now we turn our attention to the $G$ action. For concreteness, we assume that $1 \in T_h$ is the representative for the coset $Z(h)$; then note that the subspace $V_{h, \rho, j}^1$ is invariant and, moreover, irreducible under the $Z(h)$ action as it is isomorphic to the irreducible $Z(h)$-representation $\rho$. It follows that $W_1$ is either $0$ or $V_{h, \rho, j}^1$. As $V_{h, \rho, j}$ has the structure of an induced representation (it is isomorphic to $\Ind_{Z(h)}^G \rho$), the $G$ action is transitive on the spaces $V_{h, \rho, j}^t$: in particular, for any $t_1, t_2 \in T_h$, there is an element $g \in G$ so that $g(T_{h, \rho, j}^{t_1}) \subset T_{h, \rho, j}^{t_2}$ and, hence, $g(W_{t_1}) \subset g(W_{t_2})$. We conclude that $\dim W_{t_1} = \dim W_{t_2}$ for each $t_1, t_2$ and, finally, that $W = 0$ or $V_{h, \rho, j}$, as desired.

This yields an orthogonal decomposition
\[
\D(G) = \bigoplus_{h \in G} \bigoplus_{\rho \in \widehat{Z(h)}} \bigoplus_{j \in \{ 1, \ldots, d_\rho\}} V_{h, \rho, j}
\]
into irreducible subspaces. Here the notation $\widehat{Z(h)}$ denotes the collection of irreducible representations of $Z(h)$ (upto isomorphism). As $\D(G)$ is semisimple~\cite{Gould:qd-reps}, all irreducible representations appear in this decomposition. This decomposition will be sufficient to describe the quantum Fourier transform over $\D(G)$.

 To complete the representation-theoretic picture, however, we discuss a few more details. It is not difficult to show that the $\D(G)$ representations $V_{h, \rho, j}$ and $V_{h', \rho', j'}$ are isomorphic whenever the pair $(h, \rho)$ and $(h',\rho')$ are conjugate in the sense that there is a group element $g$ so that $h' = h^g$ and the $Z(h')$ representation $(\rho')^g: z \mapsto \rho'(z^g)$ is isomorphic to $\rho$.
 
In order to describe an explicit action of $\D(G)$ on its constituent irreducible representations, note that for a transversal we can pick elements $k_g$ such that $(k_g)^{-1} g k_g =h$, where $g$ is a conjugate of $h$. The the action of $\D(G)$ can then be written as
\[
gh^\ast \ket{k_{g^\prime}, v} = \delta_{h,g^\prime}\ket{k_{gg^\prime g^{-1}}, \rho(k_{gg^\prime g^{-1}}^{-1}g^\prime k_{g^\prime})v} \,.
\]
For notational simplicity, we denote the transversal element $k_g$ by $g$ in the first register and write the above equation as
\begin{equation}\label{R:InsideIrrep}
g h^\ast \ket{g^\prime, v} = \delta_{h,g^\prime}\ket{g g^\prime g^{-1}, \rho(k_{g g^\prime g^{-1}}^{-1}g^\prime k_{g^\prime})v}\,.
\end{equation}

Finally, a remark about conjugate (contragredient) representations over $\D(G)$. If $\rho: \D(G) \rightarrow \GL(V)$ is a representation of $\D(G)$, its conjugate representation $\rho^*: \D(G) \rightarrow \GL(V^*)$ is defined by the rule
$\rho^*(\alpha)f: v \mapsto f(\rho(S\alpha) v)$. It follows that the conjugate of the irreducible representation $\Ind_{Z(h)} \rho$ is the representation $\Ind_{Z(h^{-1})} \rho^*$. Self-dual representations will play a special rule in our applications to knot theory; see Section~\ref{sec:plat}.

\subsection{The quantum Fourier transform over \texorpdfstring{$\D(G)$}{D(G)}}
The quantum Fourier transform (QFT) over $\D(G)$ is a unitary matrix which transforms the $\{\ket{g,h^\ast}\}$ basis to the $\{ \ket{h^\ast,\rho,j,t,i} \}$ basis (using the notation defined in the previous section). The discussion above yields an explicit decomposition of $\D(G)$ into irreducible representations assuming such a decomposition for each group algebra $\C[Z(h)]$. It follows that one can efficiently carry out the quantum Fourier transform over $\D(G)$ if
\begin{itemize}
\item given $h$, one can efficiently carry out the QFT over $Z(h)$; and
\item given $h$ and $g$, one can efficiently express $g = t z$, where $z \in Z(h)$ and $t$ is an element of a fixed transversal $T_h$ of $Z(h)$. (Here ``fixed'' means that $T_h$ may depend on $h$, but not on $g$.)
\end{itemize}
We remark that the first condition is stronger than merely insisting that there be an efficient quantum Fourier transform over the group $Z(h)$: the circuit must be efficiently computable from $h$ in time $\polylog(|G|)$. Note that the algorithm promised in the second condition implicitly determines a transversal $T_h$ for each $Z(h)$. In the following section, we show that these conditions are satisfied for $S_n$, the symmetric group.

With such algorithms, the QFT itself is straightforward:
\begin{enumerate}
\item Convert the basis $\ket{g,h^\ast}$ to $\ket{t,z,h^\ast}$, where $t\in T_h$ and $z\in Z(h)$.
\item Rewriting it as $\ket{h^\ast,t,z}$, apply the QFT over $Z(h)$ on the third register conditioned on $h^\ast$ in the first register. We obtain the basis $\ket{h^\ast,t,\rho,i,j}$, where $\rho\in\widehat{Z(h)}$. Reordering again yields the basis of~\eqref{eq:DG-F-basis} above.
\end{enumerate}
This gives us the QFT over $\D(G)$.

\revise{
Clearly, efficient QFT over each $Z(h)$ is sufficient to construct the QFT over $\D(G)$. Now, we show that this condition is also necessary. More precisely, we show that if one can perform a QFT over $\D(G)$, then one can use it to implement a QFT over any $Z(h)$. Suppose we have a state $\sum_z f(z)\ket{z}$, where $z\in Z(h)$, for some $h\in G$. We would like to obtain the state $\sum_{\rho ,i,j} \hat{f}(\rho,i,j) \ket{\rho,i,j}$, where $\hat{f}$ is the Fourier transform of $f$. First, we embed this state into $\D(G)$ as $\sum_z f(z)\ket{h^\ast,t,z}$, for some $t\in T_h$. Now we can perform a QFT over $\D(G)$ and obtain $\sum_{\rho ,i,j,t,h} \hat{g}(\rho,i,j,t,h)\ket{h^\ast,\rho,t,i,j}$, where $g(h^\prime,t^\prime,z)=f(z) \delta_{t,t^\prime} \delta_{h,h^\prime}$. Since $h$ and $t$ are unaffected, we have that $\hat{g}(\rho,i,j,t,h)=\hat{f}(\rho,i,j) \delta_{t,t^\prime} \delta_{h,h^\prime}$. Now throwing away the registers containing $t$ and $h^\ast$, we obtain the QFT over $Z(h)$. This shows that performing a QFT over each $Z(h)$ is necessary and sufficient to perform a QFT over $\D(G)$. Finally, note that having an efficient QFT over each centralizer $Z(h)$ implies that one has an efficient QFT over $G$ since $Z(e)=G$, where $e$ is the identity element of $G$.
}

\subsubsection{An efficient QFT over \texorpdfstring{$\D(S_n)$}{the quantum double of the symmetric group}}

In light of the discussion above, we show that one can uniformly compute the quantum Fourier transform over centralizers in $S_n$ and, additionally, uniformly factor along a canonical transversal of each centralizer. It follows that we can compute the QFT over $\D(S_n)$.

First, let us determine the centralizer of an element $\sigma$ of $S_n$ consisting of $c_1$ one cycles, $c_2$ two cycles, etc. For an element $\tau \in Z(\sigma)$, we have $\tau \sigma = \sigma \tau$ and hence $\tau \sigma \tau^{-1} = \sigma$: $\tau$ fixes $\sigma$ under conjugation. Recall that the conjugation action by $\tau$ is given by the permutation action of $\tau$ on the labels of the cycle representation of $\sigma$; thus, for each $k$, the $k$-cycles of $\sigma$ are fixed under the action $(a_1\,\ldots\,a_k) \mapsto (\tau(a_1)\,\ldots\,\tau(a_k))$. At a coarser level, the permutation $\tau$ fixes, as a set, the elements of 
\[
W_k = \{ t \in \{1, \ldots, n\} \mid \text{$t$ lies in a $k$-cycle of $\sigma$}\}\,,
\] 
and it follows that $Z(\sigma)$ is the direct product $Z(\sigma_1) \times \cdots \times Z(\sigma_n)$, where $\sigma_k$ is the product of the $c_k$ cycles of length $k$ appearing in $\sigma$ and $Z(\sigma_k)$ is the centralizer of $\sigma_k$ (in the subgroup of permutations on $W_k$). Observe that when $(\tau(a_1) \,\ldots\, \tau(a_t)) = (a_1\, \ldots\, a_t)$, the element $\tau$ (restricted to $\{ a_1, \ldots, a_t\}$) lies in the cyclic subgroup generated by $(a_1, \ldots, a_t)$. It follows that on $W_k$ the element $\tau$ can be written as the product of two permutations, one which cyclicly permutes each $k$-cycle, and one which permutes the $k$-cycles. This subgroup of permutations on $W_k$ is isomorphic to the wreath product $S_{c_k} \wr \Z_k = S_{c_k} \ltimes \Z_k^{c_k}$ and we conclude that
\[
Z(\sigma) \cong \bigoplus_{k=1}^n S_{c_k} \ltimes \Z_k^{c_k}\,.
\]

\paragraph{Coset factorization}
Let $H$ be a subgroup of $G$ and $T$ a transversal of $H$ in $G$ (that is, a set containing one representative from each left coset of $H$ in $G$). \emph{$(H,G)$-coset factorization} is the problem of expressing an arbitrary element $g$ of $G$ as a product $th$, where $h \in H$ and $t \in T$. Observe that if $K < H < G$ is a tower of subgroups and $T_K$ and $T_H$ are transversals of $K$ in $H$ and $H$ in $G$, respectively, then $T_K T_H$ is a transversal of $K$ in $G$ and $(K,G)$-coset factorization along this transversal reduces to $(K,H)$-coset factorization and $(H, G)$-coset factorization.

The induced bases we use to describe the Fourier transform above rely on coset factorization of an element along arbitrary, but fixed transversals $T_\sigma$ of the various centralizers $Z(\sigma)$. To see that these can be computed effectively in the symmetric groups, let $\sigma \in S_n$ and 
\[
Z(\sigma) \cong \bigoplus_{k=1}^n S_{c_k} \ltimes \Z_k^{c_k}\,,
\]
its centralizer, as discussed above. Our goal is to effectively decompose an arbitrary element $\pi \in S_n$ as a product $\pi = t z$, where $z \in Z(\sigma)$ and $t$ lies in a transversal $T_\sigma$ (which may be determined implicitly by the algorithm). We consider the subgroup chain
\newcommand{\subsetnumbered}[1]{\subset_{(#1)}}
\[
\bigoplus_k S_{c_k} \ltimes \Z_k^{c_k} \quad \subsetnumbered{1} \quad \bigoplus_k S_{c_k} \ltimes S_k^{c_k} \quad \subsetnumbered{2} \quad \bigoplus_k S_{c_k \times k} \quad \subsetnumbered{3} \quad S_n\,.
\]
It is an easy exercise to identify natural transversals for the inclusions $\subsetnumbered{1}$ and $\subsetnumbered{3}$ that permit efficient factorization. As for the factorization corresponding to $\subsetnumbered{2}$, it suffices to handle individual terms in the direct sum of the form
\[
S_k \ltimes S_\ell^k \quad \subset \quad S_{k\ell}\,.
\]
For this purpose, we consider an associated action of $S_{k\ell}$. We say that a family of sets $(A_1, \ldots, A_\ell)$ is a $(k, \ell)$-partition if their disjoint union is the set $\{1, \ldots, k\ell\}$ and each $A_i$ has size $k$. Then consider the action of $S_{k\ell}$ on the set
\[
X_{k,\ell} = \bigl\{ \{ A_1, \ldots, A_k\} \mid \text{$(A_1, \ldots, A_k)$ is a $(k,\ell)$-partition of $\{1, \ldots, k\ell\}$} \bigr\}\,.
\]
The stabilizer of the element
\[
x_0 = \{ \{1, \ldots, k\}, \{ k+1, \ldots, 2k\}, \ldots\}
\]
is precisely a subgroup of the form $S_\ell \ltimes S_k^\ell$ so we identify a left transversal of $S_\ell \ltimes S_k^\ell$ in $S_{k\ell}$ by identifying, for each element $x$ of $X_{k,\ell}$, a permutation that carries $x_0$ to $x$. This can be obtained by ``sorting'' the sets of $x$ according to their smallest element, ``sorting'' the elements of each individual set, and selecting the permutation that carries $x_0$, expressed in the order above, to $x$, in this sorted order. Factorization along this transversal is straightforward by identifying the result of a permutation $\pi$ on $x_0$.

\paragraph{QFTs over the centralizers}
It is enough to describe how to perform a QFT over wreath products of the form $W=\Z_k \wreath S_\ell = (\Z_k)^{\ell} \rtimes S_\ell$. 

The irreps of such groups are easily obtained using Clifford theory (see~\cite{CurtisAndReiner}). Pick an irrep of $(\Z_k)^\ell$, say $\omega=(\omega_1,\dots,\omega_\ell)$ where each $\omega_i$ is an irrep of $\Z_k$, and consider the action of the symmetric group $S_\ell$ on the components $\omega_i$. Define $S_\omega$ to be the subgroup of $S_\ell$ that fixes this irrep, i.\,e., the subgroup whose elements permute components which are the equal. Thus $S_\ell$ is a Young subgroup, equal to a product of symmetric groups (each of which acts on a collection of equal indices). If $\lambda$ is an irrep of $S_\omega$, then $\omega\otimes\lambda$ is an irrep of $W_\omega=(\Z_k)^\ell\rtimes S_\omega$. Clifford theory asserts that the irreps of the wreath product are of the form
\[
\uparrow_{W_\omega}^W(\omega\otimes\lambda)\,.
\]
$W_\omega$ is traditionally called the \emph{inertia group} of the irrep $\omega$. 

The QFT over this group can now be constructed using this structure of the irreps. We need to construct a transformation which takes us from the basis $\ket{(z_1,\dots,z_\ell),\pi}\in W$ to the basis $\ket{t,\omega,\lambda}$, where $t$ is an element of a fixed transversal for $W_\omega$ in $W$. In order to do this, we first perform a QFT over $(\Z_k)^\ell$ to obtain the basis $\ket{\omega,\pi}$. Now conditioned on $\ket{\omega}$, we re-write this basis as $\ket{\omega, t,\pi^\prime}$, where $t$ is an element of a transversal for this Young subgroup and $\pi^\prime$ is an element of $W_\omega$. Again, conditioned on $\omega$, we perform a QFT over $W_\omega$ on the third register to obtain the basis $\ket{\omega,t,\lambda,i,j}$, where $\lambda$ is an irrep of $S_\omega$ and $i$ and $j$ label its rows and columns. This step may be carried out by Beals's algorithm~\cite{Beals:1997:QCF:258533.258548} for the QFT over the symmetric group. This gives the required basis.

\section{Algorithms for approximating link invariants}\label{algorithms}
\subsection{Link invariants}
A \emph{knot} is a closed (smooth) curve in $\R^3$ with no self intersections; a \emph{link} is a finite collection of non-intersecting knots. We study knots and links upto deformation; in several cases, we work with oriented variants. In particular, recall that a continuous map $D: \R^3 \times [0,1] \rightarrow \R^3$ is an \emph{isotopy} of $\R^3$ if each $D_t(x) = D(x,t)$ is a homeomorphism and $D_0$ is the identity map. We then identify two knots (or links) if there is an isotopy that carries the first onto the second; in this case we say that they are \emph{ambient isotopic}. 

A \emph{link invariant} is association of links to algebraic objects (numbers, groups, modules, etc.) that respects the equivalence relation of ambient isotopy: equivalent links must be associated with the same object. One well-studied framework for studying links is to represent them as certain canonical ``closures'' of braids and explore the algebraic properties of braids that preserve link equivalence in this representation. We consider two such closures here: trace closure and plat closure.

\subsubsection{Trace closure}
Trace closure of a braid is obtained by joining the ends (top to bottom) as shown in Figure~\ref{fig:trace-closure}; Alexander's theorem~\cite{Alexander} asserts that any link can be obtained in this fashion. Of course, a given link can be represented as the closure of many different braids. 
\begin{figure}
\centering
\begin{subfigure}[b]{0.3\textwidth}
\centering
\begin{tikzpicture}[style=stangle,scale=.2]
\draw[draw=red,semithick,scale=0.75] (-10,-5) rectangle (10,5);

\draw[double] (-7,4) -- (-7,6);
\draw[double] (-7,-4) -- (-7,-6);
\draw[double] (-9,6) -- (-9,-6);
\draw[double] (-7,-6) .. controls (-7,-8) and (-9,-8) .. (-9,-6);
\draw[double] (-7,6) .. controls (-7,8) and (-9,8) .. (-9,6);

\draw[double] (-5,4) -- (-5,6);
\draw[double] (-5,-4) -- (-5,-6);
\draw[double] (-11,6) -- (-11,-6);
\draw[double] (-5,-6) .. controls (-5,-10) and (-11,-10) .. (-11,-6);
\draw[double] (-5,6) .. controls (-5,10) and (-11,10) .. (-11,6);

\draw[double] (7,4) -- (7,6);
\draw[double] (7,-4) -- (7,-6);
\draw[double] (-21,6) -- (-21,-6);
\draw[double] (7,-6) .. controls (7,-14) and (-21,-14) .. (-21,-6);
\draw[double] (7,6) .. controls (7,14) and (-21,14) .. (-21,6);

\draw[basic] (-3,6) circle (4pt);
\draw[basic] (0,6) circle (4pt);
\draw[basic] (3,6) circle (4pt);
\draw[basic] (6,6) circle (4pt);

\draw[basic] (-3,-6) circle (4pt);
\draw[basic] (0,-6) circle (4pt);
\draw[basic] (3,-6) circle (4pt);
\draw[basic] (6,-6) circle (4pt);

\draw[basic] (-17,0) circle (4pt);
\draw[basic] (-20,0) circle (4pt);
\draw[basic] (-23,0) circle (4pt);
\draw[basic] (-26,0) circle (4pt);

\draw (0,0) node {\textsl{braid}};
\end{tikzpicture}
\caption{The trace closure.}\label{fig:trace-closure}
\end{subfigure}\quad
\begin{subfigure}[b]{0.3\textwidth}
\centering
\begin{tikzpicture}[style=stangle,scale=.2]

\draw[draw=red,semithick,scale=0.75] (-10,-5) rectangle (10,5);
\draw[double] (-7,4) -- (-7,6);
\draw[double] (-5,4) -- (-5,6);
\draw[double] (-7,6) .. controls (-7,8) and (-5, 8) .. (-5,6);

\draw[double] (-7,-4) -- (-7,-6);
\draw[double] (-5,-4) -- (-5,-6);
\draw[double] (-7,-6) .. controls (-7,-8) and (-5, -8) .. (-5,-6);

\draw[double] (7,4) -- (7,6);
\draw[double] (5,4) -- (5,6);
\draw[double] (7,6) .. controls (7,8) and (5, 8) .. (5,6);

\draw[double] (7,-4) -- (7,-6);
\draw[double] (5,-4) -- (5,-6);
\draw[double] (7,-6) .. controls (7,-8) and (5, -8) .. (5,-6);

\draw[basic] (-3,6) circle (4pt);
\draw[basic] (0,6) circle (4pt);
\draw[basic] (3,6) circle (4pt);

\draw[basic] (-3,-6) circle (4pt);
\draw[basic] (0,-6) circle (4pt);
\draw[basic] (3,-6) circle (4pt);

\draw (0,0) node {\textsl{braid}};
\end{tikzpicture}
\vspace{9mm}
\caption{The plat closure.}
\label{fig:plat-closure}
\end{subfigure}
\caption{Braid closures.}
\end{figure}
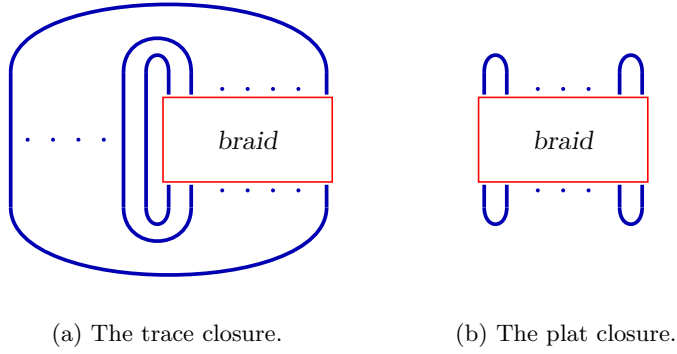

It has been shown by Markov~\cite{Markov} that braids that yield equivalent knots under the trace closure are related by the \emph{Markov moves}. These moves are
\begin{align*}
\theta\eta&\longleftrightarrow \eta\theta, &&\theta,\eta\in B_n\,, \text{and} \\
\theta&\longleftrightarrow \theta \sigma_{n-1}^{\pm 1}, &&\theta\in B_{n-1}\subset B_n\,.
\end{align*}
A \emph{Markov trace} is a map $\phi:B_n\rightarrow \C$ that is well-behaved with respect to these moves:
\[
\phi(\theta\eta)=\phi(\eta\theta)\,, \quad
\phi(\theta \sigma_{n-1})=z\phi(\theta)\,,  \quad
\phi(\theta \sigma_{n-1}^{-1})=\bar{z}\phi(\theta)\,,
\]
where $z=\phi(\sigma_{n-1})$ and $\bar{z}=\phi(\sigma_{n-1}^{-1})$. With such a trace map, one can construct a link invariant by defining, for any braid $\theta$ that realizes the link, the quantity
\begin{equation}L(\theta)=(z\bar{z})^{-(n-1)/2} \left(\frac{\bar{z}}{z}\right)^{e(\theta)/2}\phi(\theta) \,,
\end{equation}
where the linking number, $e(\theta)$, is the sum of the exponents of the generators $\sigma_i$ appearing in $\theta$.

It was shown by~\citet{TsohantjisG:qd-links} that the quantum doubles of finite groups yield link invariants by this approach. Specifically, let $(\rho, h)$ be an irrep of $\D(G)$ acting on the Hilbert space $V$ and let $\tau$ denote the natural representation of the braid group $B_n$, described in Section~\ref{sec:braiding} above, on the space $V^{\otimes n}$. Then the quantity
\begin{equation}
L_{\rho,h}(\theta)=(d_{\rho,h})^{-1} \langle h\rangle_\rho^{-e(\theta)}\tr(\tau(\theta))
\end{equation}
is a link invariant, where $\langle h\rangle_\rho=\chi_{\rho}(h)/d_\rho$ and $\chi_\rho, d_\rho$ are the character and dimension of the irrep $\rho$ of $Z(h)$. 

\subsubsection{Plat closure}\label{sec:plat}
A related method to represent links by braids is the \emph{plat closure}. This is defined on braids with an even number of strands where one takes pairs of adjacent strands on the top (and bottom) and joins them as in Figure~\ref{fig:plat-closure}. Similar to trace closure, it can be shown that any link can be represented as the plat closure of some braid.

\citet{Birman:Plat} proved an analogue of Markov's theorem for the plat closure.
\begin{theorem}[{\citet{Birman:Plat}}]
Given two braids in $B_{2n}$ with the same plat closure, there exists a sequence of moves from one to the other such that each move is one of the following:
\begin{itemize}
\item(Double coset move) $\theta\longleftrightarrow h_1\theta h_2$, where $h_1,h_2 \in H_{2n}$ (defined below) and,
\item(Stabilization move) $\theta\longleftrightarrow \sigma_{2n}\theta$.
\end{itemize}
Here, $H_{2n}$ denotes the \emph{Hilden group}, the subgroup of the braid group $B_{2n}$ generated by
\[
\sigma_1\,,\quad \sigma_2\sigma_1^2\sigma_2, \quad\text{and} \quad \sigma_{2i}\sigma_{2i-1}\sigma_{2i+1}\sigma_{2i}\,, \quad \forall i\in\{1,\dots ,n-1\}\,.
\]
\end{theorem}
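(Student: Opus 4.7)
The plan is to adapt the strategy used in the proof of Markov's theorem for the trace closure. The easy direction — that both moves preserve the plat closure — follows by direct topological inspection: the three stated generators of $H_{2n}$ are precisely those braid elements whose effect on the top cup-cap pattern is (i) to flip the endpoints of a single cup pair, (ii) to perform an internal twist on a single cup pair, and (iii) to exchange two adjacent cup pairs, all of which leave the link type unchanged. Stabilization appends two fresh strands with a single crossing; after plat-closure this creates a split trivial component that can be isotoped away.

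For the converse, I would interpret a plat-closed braid $\hat\theta$ with $\theta \in B_{2n}$ as a \emph{bridge presentation} of the link $L = \hat\theta$ with respect to the natural horizontal height function $h:\R^3\to\R$: the $n$ caps and $n$ cups become the maxima and minima of the components, and the $2n$ vertical strands of $\theta$ encode how these critical points are connected. Two braids $\theta,\theta' \in B_{2n}$ with the same plat closure then yield bridge presentations of $L$ at the same bridge index $n$, and there is an ambient isotopy $F_t$ between their embeddings. I would place this isotopy into general position relative to $h$ so that $h\circ F_t$ has only finitely many moments at which a local maximum/minimum pair is created or cancelled, or at which two critical levels cross.

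Away from critical moments, the braid word is modified by \emph{horizontal isotopies} that permute arcs among the cups and caps without altering their number. I would argue that each such horizontal isotopy factors as a product of the three generators of $H_{2n}$ acting at the top or bottom of the braid, which realizes exactly a double coset move $\theta \mapsto h_1\theta h_2$ with $h_1,h_2 \in H_{2n}$. At the discrete moments where a max/min pair is created or annihilated, the braid word changes by inserting (or removing) a new strand pair together with a single crossing; after sliding this pair to the rightmost position via another double coset move, this contributes exactly the stabilization move $\theta \leftrightarrow \sigma_{2n}\theta$. Stringing together these local moves along the isotopy yields the required finite sequence relating $\theta$ to $\theta'$.

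I expect the main obstacle to be the combinatorial verification that the three stated generators do generate all of $H_{2n}$ and that every horizontal isotopy factors through them. Algebraically, $H_{2n}$ is the subgroup of $B_{2n}$ that fixes the perfect matching of $2n$ punctures determined by the cups, and establishing the given presentation is a nontrivial statement about the mapping class group of a disk with a distinguished pair-of-pants decomposition. A normal-form argument tracking how each generator permutes matchings, combined with a compactness step that reduces a generic one-parameter isotopy to a finite word in these generators, should close the gap.
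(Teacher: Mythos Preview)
The paper does not prove this statement. It is quoted as a known result of Birman (with the generating set for $H_{2n}$ due to Hilden) and is used only as background to motivate the plat-closure invariant; no argument for it appears anywhere in the text. There is therefore nothing in the paper to compare your proposal against.

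As a side remark on the proposal itself: the outline is broadly in the spirit of how such ``Markov-type'' theorems are established, but two points would need repair before it could be completed. First, the stabilization move $\theta \leftrightarrow \sigma_{2n}\theta$ passes from $B_{2n}$ to $B_{2(n+1)}$, so the two bridge presentations you start from need not have the same bridge index; the isotopy argument must allow the number of cups and caps to change along the way, not just at the end. Second, the identification of $H_{2n}$ with the stabilizer of the cup pattern, and the claim that the three listed elements generate that stabilizer, is exactly Hilden's theorem and is a substantial mapping-class-group computation in its own right; it cannot be recovered from a ``normal-form argument tracking how each generator permutes matchings,'' since many nontrivial elements of $H_{2n}$ act trivially on the set of matchings. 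In Birman's treatment these two inputs (Hilden's generators and a stable-equivalence result for bridge presentations) are taken as established and then combined; your sketch would need to invoke them rather than rederive them.
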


The quantum double of a finite group $G$ likewise yields link invariants by this approach. Consider the $2n$-fold tensor power of an irrep $\Lambda$ for which $\bar{\Lambda}=\Lambda$, i.\,e., $\Lambda$ is its own conjugate. For adjacent pairs $\Lambda^{\otimes 2}$, define the state
\begin{equation}
\label{phi}
\ket{\Phi} = \frac{1}{\sqrt{d_{\Lambda}}}\sum_{v}\ket{v,v}\,,
\end{equation}
where the sum runs over all the elements of an orthonormal basis of $V_{\Lambda}$ and $d_\Lambda$ is its dimension. This maximally entangled state is the copy of the trivial representation in the tensor product. Now, define the state
\begin{equation}\label{alpha}
\ket{\alpha} = \ket{\Phi}^{\otimes n}
\end{equation}
and, given a braid $\theta \in B_{2n}$ that realizes the desired link under the Plat closure, let the representation of $\theta$ in the above tensor power be $\tau_{\Lambda}(\theta)$. The plat closure yields a link invariant:
\begin{equation}\label{Plat}
\Pl_{\Lambda}(\theta) = \bra{\alpha}\tau_{\Lambda}(\theta)\ket{\alpha}\,.
\end{equation}
Physically, one can view the plat closure as follows: particle-antiparticle pairs are created, their world lines are braided and then they are annihilated. The plat closure is the amplitude of obtaining the vacuum at the end of this process. One also requires that the particle be its own antiparticle since we need the representation to be its own conjugate.

\subsection{Quantum algorithms for link invariants}\label{QuantumAlg}
In this section, we present quantum algorithms to additively approximate the trace and plat closures of braids in some irrep $\Lambda$ of $\D(G)$. We use the QFT over $\D(G)$ to accomplish this: critically, the $R$ matrices can be implemented efficiently in the regular, combinatorial representation so we may use the QFT over $\D(G)$ to implement it inside the irreps. Then we take the trace over the irrep of choice. We describe this in more detail below. First, we need the definition of additive approximation (see~\citet{ApproximateCounting}).
\begin{definition}
Given any function $f:D\rightarrow \C$ and a normalization $u:\Z^{+}\rightarrow\R^{+}$, an \textbf{additive approximation} for the pair $(f,u)$ is a probabilistic algorithm which given any $x\in D$ and $\epsilon>0$ produces an output $g(x)$, such that
\[
\Pr[|f(x)-g(x)|>\epsilon u(|x|)]<1/4\,,
\]
in time polynomial in $|x|$ and $\epsilon^{-1}$.
\end{definition}

In order to approximate the trace closure of some braid $\theta$ written as a word in the generators $\sigma_i$, we first construct the representation of $\theta$ in $\D(G)^{\otimes n}$ by implementing the $R$ matrices in $\D(G)$. In the case of finite group doubles, the $R$ matrix is a unitary operator and, in fact, a permutation operator in the standard basis. Recall that it is defined as
\[
R=\sum_g g\otimes g^\ast \,;
\] 
thus its action on a basis element $\ket{g_1h_1^\ast}\otimes \ket{g_2h_2^\ast}$ is given by
\[
R\ket{g_1h_1^\ast,g_2h_2^\ast} = \sum_{g} \ket{gg_1h_1^\ast, g_2(g^{g_2}h_2)^\ast} = \ket{h_2^{g_2^{-1}}g_1h_1^\ast, g_2h_2^\ast} .
\]
Therefore, the action of $\sigma_1=TR$ is
\[
\sigma_1\ket{g_1h_1^\ast,g_2h_2^\ast} = \ket{g_2h_2^\ast,h_2^{g_2^{-1}}g_1h_1^\ast} .
\]
This operator is a left multiplication by a group element followed by a swap which can be efficiently implemented. 


In order to take the trace, we must be able to produce random basis vectors in the irrep $\Lambda$ embedded in $V_{\D(G)}$. The basis vectors inside the irrep $(h,\rho)$ can be constructed using the fact that it is an induced representation. If $\{w_1,\dots ,w_d\}$ are a basis for $V_\rho$, then $\{t_1\otimes w_1,\dots t_k\otimes w_d\}$ is a basis for $V_{\rho,h}$, where $t_i\in T_h$ as before. Assuming that we can embed the vectors $\ket{w_i}$ in the vector space $V_{Z(h)}$, then the vectors $\ket{h^\ast,t_j,w_i}$ are basis vectors of the irrep $(h,\rho)$ embedded in $V_{\D(G)}$. Now, using the Hadamard test, we can find the inner product
\[
\bra{v_i}F\tau_{\Lambda}(\theta) F^{-1}\ket{v_i}\,,
\]
for any given $v_i$, where $F$ is the QFT over $\D(G)$. The Hadamard test is a standard tool in quantum computing used to estimate the matrix entries of a unitary operator. The description below follows \cite{WY} (see also \cite{AJL}).
\begin{description}
\item[Hadamard test] Suppose that $U$ is a unitary operator that can be implemented efficiently, i.\,e., in $O(\text{poly}(n))$ time and $\ket{\psi}$ is a pure state on $n$ qubits which can be prepared efficiently, then one can efficiently sample from random variables $X,Y\in \{-1,1\}$, where
\[
\mathbb{E}[X+iY] = \bra{\psi}U\ket{\psi}\,.
\]
\end{description}
Finally, the Chernoff bound shows that choosing $v_i$ uniformly at random from the orthonormal basis and computing the inner product $\bra{v_i}F\tau_{\Lambda}(\theta) F^{-1}\ket{v_i}$ is enough to give an additive approximation.
\begin{description}
\item[Chernoff bound] If $\{X_1,\dots ,X_k\}$ are real-valued random variables such that $|X_i|\leq 1$ and $\mathbb{E}[X_i]=\mu$ for all $i$, then
\[
\Pr\left[\left|\frac{1}{k}\sum_{i=1}^k X_i - \mu \right|>\epsilon\right]\leq 2\exp\left(-\frac{k\epsilon^2}{4}\right)\,.
\]
\end{description}
In order to apply this to complex numbers $Z_j=X_j+iY_j$, we need the union bound along with the Chernoff bound.
\[
\Pr\left[\left|\frac{1}{k}\sum_{i=1}^k Z_i - \mathbb{E}Z_i \right|>\epsilon\right] \leq \Pr\left[\left|\frac{1}{k}\sum_{i=1}^k X_i - \mathbb{E}X_i \right|>\frac{\epsilon}{\sqrt{2}}\right] +\Pr\left[\left|\frac{1}{k}\sum_{i=1}^k Y_i - \mathbb{E}Y_i \right|>\frac{\epsilon}{\sqrt{2}}\right] \leq 4\exp \left(-\frac{k\epsilon^2}{8}\right)\,.
\]
To determine $k$ (the number of times we need to run the algorithm), let $Z_i=X_i+i Y_i$ where $X_i$ and $Y_i$ are obtained from the Hadamard test. Let $\hat{Z}_i=(d_{\rho,h})^{-1} \langle h\rangle_\rho^{-e(\theta)}Z_i$. Then for any $\epsilon$, the probability that the average of $\hat{Z}_i$ over $k$ trials differs from the trace closure by more than $\epsilon$ is less than $4\exp(-k\epsilon^2/8)$. In order to make this quantity less than $1/4$, we need $k> 32\log 2/\epsilon^2$.

The algorithm to additively approximate plat closure is similar. We must prepare the state $\ket{\alpha}$ from \eqref{alpha} efficiently. In order to do this, it is enough to prepare the state $\ket{\Phi}$ efficiently. Once we do this, we can perform the Hadamard test to estimate the inner product $\bra{\alpha}\tau_{\Lambda}(\theta)\ket{\alpha}$ and use the Chernoff bound to find the additive approximation to plat closure of $\theta$. The state $\ket{\Phi}$ is the maximally entangled state over the irrep $\Lambda\otimes\Lambda$ embedded in $\D(G)\otimes\D(G)$. This can be prepared by first creating a maximally entangled state over a space of dimension $V_{\Lambda}^{\otimes 2}$. Then we apply the embedding isometry to embed this state into $\C[\D(G)]^{\otimes  2}$.
\begin{theorem}
Given a braid $b\in B_n$ of size $m$, a finite group $G$ and any irrep $([g],\rho)$ of $\D(G)$, there exists a quantum algorithm to additively approximate the trace and plat closures of the braid when the strands are colored by the irrep $([g],\rho)$ of $\D(G)$, in time O(poly($m,n,\log|G|,1/\epsilon$)), where $\epsilon$ is the error in approximation.
\end{theorem}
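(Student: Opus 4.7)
The plan is to combine three ingredients: (i) an efficient implementation of the braid-group action $\tau_\Lambda(\theta)$ inside an irrep of $\D(G)$; (ii) efficient preparation of the relevant input states; and (iii) an invocation of the Hadamard test together with a Chernoff-bound argument to convert amplitude estimates into an additive approximation.

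For (i), I would work in the regular representation, where the generator $\sigma_i$ acts on the $i$-th and $(i+1)$-st tensor factors as $T\cdot R$. Using the explicit form $R=\sum_g g\otimes g^\ast$, this action reduces to a conditional left-multiplication followed by a swap, which is a permutation of the combinatorial basis $\{\ket{gh^\ast}\}$ and therefore implementable in $\operatorname{poly}(\log|G|)$ time per generator. To transfer the computation into the chosen irrep $([g],\rho)$, I would conjugate by the $\D(G)$-QFT $F$ of Section~\ref{Fourier}: since $F$ carries the combinatorial basis to $\ket{h^\ast,\rho,j,t,i}$, the subspace with fixed labels $(g,\rho)$ is the embedded copy of $V_\Lambda^{\otimes n}$, on which the conjugated permutation implements $\tau_\Lambda$. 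For (ii), in the trace-closure case I only need a uniformly random basis vector of $V_\Lambda^{\otimes n}$; sampling $t\in T_h$ and a basis index $w$ of $V_\rho$ classically and applying $F^{-1}$ yields such a vector. For the plat-closure case I would prepare $\ket{\alpha}=\ket{\Phi}^{\otimes n}$ from~\eqref{alpha} by first preparing a maximally entangled state on $\C^{d_\Lambda}\otimes\C^{d_\Lambda}$ with $O(\log d_\Lambda)$ Hadamards and CNOTs and then applying the isometric embedding $V_\Lambda\hookrightarrow\C[\D(G)]$ (the restriction of $F^{-1}$ to the $(g,\rho,j)$ sector) to each factor.

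For (iii), the Hadamard test applied to the unitary implementing $\tau_\Lambda(\theta)$ on the input state from (ii) returns $\pm 1$ random variables whose expectations are the real and imaginary parts of the desired matrix element. Rescaling by the Markov-trace normalization $(d_{\rho,h})^{-1}\langle h\rangle_\rho^{-e(\theta)}$ and averaging over $k=\Theta(\epsilon^{-2})$ trials, the Chernoff bound recalled in the excerpt drives the failure probability below $1/4$. In the trace case, uniformly randomizing $\ket{v_i}$ makes the expected matrix element equal to $(d_{\rho,h})^{-1}\tr\tau_\Lambda(\theta)$, and a single union bound absorbs both the internal Hadamard-test randomness and the choice of $\ket{v_i}$.

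The only substantive obstacle is the efficient implementation of $F$; once that is in hand (as established in Section~\ref{Fourier}), everything else is a standard composition of permutation circuits, state preparation, the Hadamard test, and Chernoff concentration. The resulting runtime is $O(\operatorname{poly}(m,n,\log|G|,1/\epsilon))$, with $m$ generators contributing linearly, $O(n)$ local QFTs per generator, $\operatorname{poly}(\log|G|)$ per QFT, and $O(1/\epsilon^{2})$ trials from the Chernoff bound.
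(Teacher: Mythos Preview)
Your proposal is correct and follows essentially the same approach as the paper: implement $TR$ as a permutation in the regular representation, conjugate by the $\D(G)$-QFT to land in the chosen irrep, sample random basis vectors (trace case) or prepare $\ket{\Phi}^{\otimes n}$ via the embedding isometry (plat case), and finish with the Hadamard test plus Chernoff. One minor remark on your runtime accounting: you only need $O(n)$ local QFTs in total (apply $F^{-1}$ once before the braid and $F$ once after), not $O(n)$ per generator, but this does not affect the polynomial bound.
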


\subsection{Classical algorithm for fluxon irreps of \texorpdfstring{$\D(G)$}{D(G)}}\label{ClassicalAlg}
In this subsection, we observe that there is a classical randomized algorithm to additively approximate link invariants for certain, combinatorial irreps of $\D(G)$ called ``fluxon" irreps. When the strands are colored by irreps of the type $\Lambda=(h,\text{tr})$, where tr is the trivial irrep of $Z(h)$, then it is possible to approximate the link invariants classically. In these irreps the $R$ matrix has the action of a permutation matrix.
\begin{equation}\label{eq:fluxon-permutation}
R\ket{g_1}\otimes\ket{g_2} = \ket{g_2g_1g_2^{-1}}\otimes\ket{g_2}\,.
\end{equation}
This means that for any braid $\theta$, $\tau_{\Lambda}(\theta)$ is a permutation matrix whose matrix entry can be determined efficiently classically (efficient in the number of crossings in $\theta$). Therefore, we can take elements uniformly at random from the basis vectors of $V_\Lambda^{\otimes n}$ and compute the inner product with $\tau_{\Lambda}(\theta)$. Then, by the Chernoff bound, we obtain an additive approximation to the trace closure. The difference between the classical and quantum algorithms is that, in the quantum case, one has the Hadamard test to efficiently estimate the matrix entries of a unitary operator. Here since the $R$ matrix is a permutation matrix, its entries are easy to determine classically. Therefore, we have
\begin{theorem}
Given a braid $b\in B_n$ of size $m$, a finite group $G$ and any conjugacy class $[g]$ of $G$, there exists a classical algorithm to additively approximate the trace and plat closures of the braid when the strands are colored by the irrep $([g],id)$ of $\D(G)$, in time O(poly($m,n,\log|G|,1/\epsilon$)), where $\epsilon$ is the error in approximation.
\end{theorem}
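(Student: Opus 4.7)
The plan is to execute the sampling strategy already indicated in the text, making precise why each step is classical and polynomial time. The heart of the argument is the observation~\eqref{eq:fluxon-permutation}: in a fluxon irrep $\Lambda = ([g], \text{tr})$ of $\D(G)$, the underlying representation space $V_\Lambda$ has a natural basis indexed by elements of the conjugacy class $[g]$, and on $V_\Lambda \otimes V_\Lambda$ (before swap) the $R$-matrix sends the basis vector $|g_1\rangle \otimes |g_2\rangle$ to $|g_2 g_1 g_2^{-1}\rangle \otimes |g_2\rangle$, which is again a basis vector because $[g]$ is closed under conjugation. Consequently, $\sigma_i = TR$ acts on $V_\Lambda^{\otimes n}$ by a permutation of the product basis, and for any word $\theta \in B_n$ of length $m$, applying $\tau_\Lambda(\theta)$ to a given basis vector amounts to carrying out at most $m$ elementary permutations; each step requires only a constant number of group multiplications, each of which costs $\operatorname{poly}(\log |G|)$ time. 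Thus a single matrix entry of $\tau_\Lambda(\theta)$ in this basis can be evaluated classically in time $O(\operatorname{poly}(m, n, \log |G|))$.

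For the trace closure, I would use this to build a Monte Carlo estimator. Writing $d_\Lambda = |[g]|$, the normalized trace $\tr(\tau_\Lambda(\theta))/d_\Lambda^n$ equals the expectation over a uniformly chosen basis vector $|v\rangle \in V_\Lambda^{\otimes n}$ of the indicator $\langle v | \tau_\Lambda(\theta) | v\rangle \in \{0,1\}$. Sampling $v$ uniformly is easy (pick $n$ independent elements of $[g]$, which can be done in polylogarithmic time given a list of conjugacy class representatives). I would draw $k = O(\epsilon^{-2})$ such indicators and average them; since each indicator is bounded, the Chernoff bound gives an additive $\epsilon$-approximation to $\tr(\tau_\Lambda(\theta))/d_\Lambda^n$, which after multiplication by the prefactors $d_\Lambda^{n-1} \langle h\rangle_\rho^{-e(\theta)}$ (noting that $\langle h\rangle_\rho = 1$ for the trivial $\rho$) yields an additive approximation to $L_{\rho, h}(\theta)$ under the natural normalization $u = d_\Lambda^{n-1}$.

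The plat closure requires slightly more care, since the state $|\alpha\rangle = |\Phi\rangle^{\otimes n}$ is entangled across paired strands. Expanding
\[
\langle \alpha | \tau_\Lambda(\theta) | \alpha\rangle = \frac{1}{d_\Lambda^n}\sum_{v_1, \ldots, v_n} \sum_{w_1,\ldots,w_n} \langle w_1, w_1, \ldots, w_n, w_n | \tau_\Lambda(\theta) | v_1, v_1, \ldots, v_n, v_n\rangle,
\]
the inner sum over $w$, for fixed $v$, counts whether the (uniquely determined) basis vector $\tau_\Lambda(\theta) | v_1, v_1, \ldots, v_n, v_n\rangle$ itself has the ``paired'' form $|w_1, w_1, \ldots, w_n, w_n\rangle$ for some $w$. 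Thus if I sample $v \in [g]^n$ uniformly and let $X_v \in \{0,1\}$ be the indicator that $\tau_\Lambda(\theta) | v,v,\ldots,v,v\rangle$ has this paired shape, then $\mathbb{E}[X_v] = \langle \alpha | \tau_\Lambda(\theta) | \alpha\rangle$. The paired-shape check is classical and takes time $O(n)$, and the per-sample cost of applying $\tau_\Lambda(\theta)$ is again $O(\operatorname{poly}(m, n, \log |G|))$. Averaging $O(\epsilon^{-2})$ samples and invoking Chernoff yields the required additive approximation of $\Pl_\Lambda(\theta)$.

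The only potential obstacle is a representation-theoretic bookkeeping one: one must confirm that the natural basis of $V_\Lambda$ can be identified with $[g]$ in such a way that the $R$-matrix action reduces to~\eqref{eq:fluxon-permutation}. This follows from the induced-representation description of $V_{h, \text{tr}}$ combined with the action~\eqref{R:InsideIrrep} of $\D(G)$, which in the trivial-$\rho$ case degenerates to a permutation of transversal elements. With that identification in place, everything else is standard Chernoff-bound sampling, and the overall running time is $O(\operatorname{poly}(m, n, \log |G|, 1/\epsilon))$ as claimed.
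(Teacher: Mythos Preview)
Your proposal is correct and follows essentially the same approach as the paper: both arguments rest on the fact that in a fluxon irrep the $R$-matrix (and hence $\tau_\Lambda(\theta)$) is a permutation of the product basis, so individual matrix entries can be computed classically in time polynomial in $m$, $n$, and $\log|G|$, after which Monte Carlo sampling over basis vectors together with the Chernoff bound yields the additive approximation. The paper's own argument is in fact terser than yours; in particular, it treats the plat case only by analogy with the trace case, whereas you have spelled out explicitly why sampling a paired input $|v_1,v_1,\ldots,v_n,v_n\rangle$ and testing whether the image under $\tau_\Lambda(\theta)$ again has paired shape gives an unbiased $\{0,1\}$-estimator for $\langle\alpha|\tau_\Lambda(\theta)|\alpha\rangle$.
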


\paragraph{Remark.} 
While the quantum algorithm above is efficient for all irreps, the classical algorithm is efficient only for invariants coming from ``fluxon" irreps. It may well be the case that there are no classical algorithms for general irreps since this problem is related to approximating elements of irreducible representations of finite groups. For groups like $S_n$, no efficient classical algorithms are known.

\section{Computational complexity of approximating link invariants}\label{complexity}
\newcommand{\Aut}{\operatorname{Aut}}

For this section alone, we assume that the group $G$ is fixed. In this section, we show that providing an additive approximation to the plat closure of a braid (when the strands are colored by an irrep of $\D(G)$) is $\BPP$-hard; likewise, we show that providing a multiplicative approximation is $\SBP$-hard. For the case of fluxons, however, since we have a randomized classical algorithm (presented in Section~\ref{ClassicalAlg}) to additively approximate plat closure, this problem is $\BPP$-complete.\footnote{When we say $\BQP$-complete ($\BPP$-complete, respectively), we mean Promise$\BQP$-complete (Promise$\BPP$-complete, respectively). There are no known problems which are $\BQP$-complete ($\BPP$-complete, respectively).} Our results also imply that computing the plat closure exactly is $\sP$-complete. In the case of link invariants arising from the quantum group $\U_q(\SU(2))$ such as the Jones polynomial, it has been shown that approximating them additively is $\BQP$-complete and approximating them multiplicatively is $\sP$-hard. The complexity of these $\U_q(\SU(2))$ invariants appears to be closely related to the fact that the image of the braid group representations is dense (in the unitary group). Indeed, \citet{Kuperberg:density} shows that denseness is a sufficient condition for $\BQP$-completeness, i.\,e., if the image of the braid group is dense, then this implies the $\BQP$-completeness of additive approximations and $\sP$-completeness of multiplicative approximations. By contrast, as mentioned above, the image of the braid group representations arising from $\D(G)$ is finite \cite{FiniteImage}, which may explain the difference in complexity. A conjecture along these lines is made by Rowell \cite{Rowell}.

To establish our results, we first show that given any randomized computation on $n$ bits, there exists a braid on $\text{poly}(n)$ strands such that the probability of success of the randomized computation is arbitrarily close to the plat closure of the braid. From this, it follows that approximating the plat closure additively is $\BPP$-hard and approximating it multiplicatively is $\SBP$-hard. It also follows that computing it exactly is $\sP$-complete. The notion of additive approximation was defined in Section~\ref{QuantumAlg}; we reproduce it here for convenience along with the definition of multiplicative approximation~\cite{ApproximateCounting}.
\begin{definition}
Given any function $f:D\rightarrow \C$ and a normalization $u:\Z^{+}\rightarrow\R^{+}$, an \textbf{additive approximation} for the pair $(f,u)$ is a probabilistic algorithm which given any $x\in D$ and $\epsilon>0$ produces an output $g(x)$, such that
\[
\Pr[|f(x)-g(x)|>\epsilon u(|x|)]<1/4\,,
\]
in time polynomial in $|x|$ and $\epsilon^{-1}$.
\end{definition}

A multiplicative approximation is a special case of a more general value-dependent approximation defined by \citet{Kuperberg:density}. A probabilistic algorithm to multiplicatively approximate a function is called an $\mathsf{FPRAS}$ (fully polynomial randomized approximation scheme).
\begin{definition}
Given $f:D\rightarrow \C$, a \textbf{multiplicative approximation} or an $\mathsf{FPRAS}$ is a probabilistic algorithm which, given any $x\in D$ and $\epsilon>0$, produces an output $g(x)$ such that
\[
\Pr[|f(x)-g(x)|>\epsilon f(x)]<1/4\,,
\]
in time polynomial in $|x|$ and $\epsilon^{-1}$.
\end{definition}
If the algorithm is deterministic, then we have an $\mathsf{FPTAS}$ (fully polynomial time approximation scheme). Next, we define the class $\SBP$.
\begin{definition}
A language $L$ is in $\SBP$ if there exists a polynomial $p$, a constant $\epsilon>0$, and a polynomial time probabilistic algorithm such that for any input $x$
\[
x\in L \quad\Longrightarrow \quad \Pr[\text{$A(x)$ accepts}]>(1+\epsilon)2^{-p(|x|)}\,,
\]
\[
x\notin L \quad\Longrightarrow\quad \Pr[\text{$A(x)$ accepts}] <(1-\epsilon)2^{-p(|x|)}\,.
\]
\end{definition} 
The class $\SBP$ (small bounded probabilistic $\mathsf{P}$) was defined in \citet{SBP}, which also established several fundamental properties of the class. It was shown that $\SBP$ contains $\mathsf{NP}$ and, in fact, lies between $\mathsf{MA}$ and $\mathsf{AM}$. From the definitions of multiplicative approximation and $\SBP$, we can see that multiplicatively approximating the success probability of an arbitrary randomized computation is (precisely) $\SBP$-hard.

We now state and prove the main theorem.
\begin{theorem}\label{Thm}
Let $\Pl(b)$ be the plat closure of a braid where the strands are colored by any conjugacy class of the alternating group $A_m$, $m\geq 5$, which has at least $4$ fixed points. Then, additively approximating the plat closure is $\BPP$-complete, multiplicative approximation is $\SBP$-complete and exact evaluation is $\sP$-complete.
\end{theorem}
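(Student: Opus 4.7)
My plan is to prove the theorem as a tight reduction between \emph{randomized computations} and \emph{plat closures} of braids colored by the fluxon irrep associated with a conjugacy class of $A_m$ that has the stated properties. The upper bound for additive approximation is already in hand: Section~\ref{ClassicalAlg} gives a classical randomized additive-approximation algorithm for fluxon invariants (the $R$-matrix is the permutation action~\eqref{eq:fluxon-permutation}, so every matrix entry of $\tau_\Lambda(\theta)$ can be computed classically in time polynomial in the number of crossings, and Chernoff/Hoeffding bounds on sampling from the $\ket{\alpha}$ state yield the approximation). Thus the content of the theorem is the three matching lower bounds, and all three will follow from one structural reduction together with simple observations about the associated counting/promise classes.

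The structural step is to construct, given a probabilistic polynomial-time algorithm $A$ on $n$ bits with acceptance probability $p_A$, a braid $b_A$ on $\mathrm{poly}(n)$ strands (colored by the specified conjugacy class of $A_m$) such that $\Pl(b_A)$ approximates $p_A$ to within any prescribed $\epsilon>0$, with $b_A$ computable from $A$ in deterministic polynomial time. The approach mirrors the ``anyons simulate computations'' philosophy of Ogburn--Preskill and Mochon, but only requires the classical-reversible subset of their ideas because fluxon braiding is already a permutation representation. Specifically: (i) encode a random bit as a superposition (arising from the plat caps) over two group elements of the chosen conjugacy class; (ii) realize reversible logic gates by products of elementary braid generators, exploiting the fact that the $4$ fixed points provide workspace for ``ancilla'' coordinates on which a 3-cycle can act without disturbing the computational registers, and exploiting that 3-cycles generate $A_m$ so that any needed local permutation is reachable via Hurwitz-style conjugation; (iii) use the plat caps $\ket{\Phi}$ in~\eqref{phi} at the bottom to initialize fresh random bits (each cap contributes an equal superposition over elements of the conjugacy class, which is closed under inversion by the self-duality of fluxon irreps for classes with $g\sim g^{-1}$ in $A_m$), and the caps at the top to postselect on the desired accept pattern by matching it with the closure pairing.

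Once this reduction is in place the three lower bounds are essentially corollaries. Additive $\BPP$-hardness is immediate: an additive approximation of $\Pl(b_A)$ to accuracy $\epsilon$ with $\epsilon < 1/6$ distinguishes the $\BPP$ promise conditions $p_A \geq 2/3$ from $p_A \leq 1/3$. Multiplicative $\SBP$-hardness follows because an $\mathsf{FPRAS}$ for $\Pl(b_A)$ can distinguish the two sides of the $\SBP$ promise $(1\pm\epsilon)2^{-p(|x|)}$ on acceptance probability, and, conversely, $\SBP$-hardness of multiplicatively approximating the acceptance probability of an arbitrary randomized computation is built into the definition of $\SBP$. Exact $\sP$-hardness follows because $\Pl(b_A)$ exactly equals $\#\text{accepting paths of }A/(\text{normalizing power of }2)$, so an exact oracle for the plat closure counts accepting paths; conversely $\Pl(b_A)\in\sP/\mathrm{poly}$ because it is a ratio of nonnegative integers counted by a polynomial-time verifier (number of consistent colorings of the braid diagram that close up correctly), which after clearing denominators lands in $\sP$.

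The main obstacle I anticipate is step (ii) of the reduction: showing that a small, explicitly constructible set of braid generators acting on tuples of elements of the chosen conjugacy class realizes, via the Hurwitz/conjugation action~\eqref{eq:fluxon-permutation}, a set of reversible gates universal for classical computation on a $\mathrm{poly}(n)$-bit register---\emph{while simultaneously preserving conjugacy-class invariants} (since braiding can only conjugate, never leave the orbit). This is where the hypotheses $m\geq 5$ and ``at least $4$ fixed points'' are used: the hypothesis $m\geq 5$ provides the simplicity of $A_m$ so that the normal closure of a nontrivial conjugacy class is all of $A_m$, and the four-fixed-point condition furnishes enough room to implement controlled-swap/Toffoli-style gadgets by dedicating the unused coordinates to control wires and scratch space. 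Careful bookkeeping of which braid words implement which permutations on the induced basis, and of the overhead in $\epsilon$ needed to push $p_A$ to $(1\pm\delta)\Pl(b_A)$, will occupy the bulk of the argument; the remaining pieces are standard.
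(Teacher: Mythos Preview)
Your outline has a genuine gap at the heart of step~(ii): you never explain how to obtain \emph{group constants}. In the fluxon model every plat cap produces the maximally mixed state $\ket{\Phi}=\frac{1}{\sqrt{d}}\sum_{g\in C}\ket{g,g^{-1}}$ over the \emph{entire} conjugacy class (not ``two group elements'' as you write), so the only inputs available are random $d$-its with $d=|C|$. But realizing a Toffoli (or any universal reversible gate) by braiding requires, via the Maurer--Rhodes/Barrington word constructions, access to \emph{specific} conjugacy-class elements $c_1,\dots,c_m$ as constants; braiding alone only conjugates, and without fixed constants you cannot pin down the needed words. Likewise the ``zero'' $d$-its in the initial state $\ket{0^k r_1\cdots r_\ell}$ are themselves constants. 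The paper's proof spends almost all of its effort on exactly this point: it builds an auxiliary piece of the braid whose Wirtinger relations force, after an amplification step, the values on certain strands to equal a chosen generating tuple $(c_1,\dots,c_m)$ up to $\Aut(G)$, with the contribution of all other solutions suppressed by a factor $2^{-\ell}$. This amplification is what drives $|\Pl-\text{P}_s|<\epsilon$.

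Relatedly, you have misidentified the role of the ``at least $4$ fixed points'' hypothesis. It is not used to carve out ancilla coordinates for Toffoli gadgets. It is used (Lemma~\ref{lem:alternating}) to guarantee that for a suitably chosen $\alpha\in C$ the equation $y=\alpha^{(1\,k)(m{-}2\;m{-}3)\,y}$ has at least two solutions in $C$; this multiplicity is precisely what makes the amplification work---each added strand doubles the weight of the intended constants relative to any unwanted tuple $\vec d\notin[\vec c]$. Simplicity of $A_m$ (your $m\ge 5$ intuition) enters separately, via Lemma~\ref{lem:solutions}, to produce a word $w$ with $w(\vec c)=\alpha$ and $w(\vec d)=1$. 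Without these two ingredients your reduction cannot control which group elements appear on the ``constant'' strands, and the plat closure will not approximate $\text{P}_s$. The upper bounds and the final deduction of $\BPP$/$\SBP$/$\sP$ hardness from the simulation lemma are fine; the missing piece is the constant-generation/amplification machinery.
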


The next lemma is the principal simulation result from which the hardness results will follow.
\begin{lemma}
Given a randomized computation on $n$ bits and any $\epsilon >0$, there exists a braid on poly$(n,{1}/{\epsilon})$ strands and $\text{poly}(n,{1}/{\epsilon})$ crossings with the strands colored by the conjugacy class irrep of the alternating group ($A_m$, $m\geq 6$) with at least $4$ fixed points, such that if the the plat closure of the braid is $\Pl$ and the probability of success of the randomized computation is $\text{P}_s$ then we have
\[
|\text{P}_s - \Pl| < \epsilon \,.
\]
\end{lemma}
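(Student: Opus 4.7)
The plan is to exploit the fact that in a fluxon irrep $\Lambda=([h],\text{tr})$ of $\D(A_m)$, equation~\eqref{eq:fluxon-permutation} forces $\tau_\Lambda(\theta)$ to be a bona fide permutation $\pi_\theta$ of the computational basis $[h]^{2n}$. Since $\ket{\alpha}$ from~\eqref{alpha} is a uniform superposition over the ``diagonal'' set $D=\{(g_1,g_1,\ldots,g_n,g_n):g_i\in[h]\}$ of size $|[h]|^n$, the plat closure collapses to a purely classical transition probability,
\[
\Pl_{\Lambda}(\theta)\;=\;\frac{\bigl|\{x\in D:\pi_\theta(x)\in D\}\bigr|}{|[h]|^n}\;=\;\Pr_{x\leftarrow D}\bigl[\pi_\theta(x)\in D\bigr]\,.
\]
Thus the task reduces to designing a polynomial-size braid $\theta$ whose induced permutation returns a uniformly random diagonal configuration to $D$ with probability $P_s\pm\epsilon$.

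To build $\theta$, I would first fix a bit-encoding using the four fixed points of $h$: choose distinct $\tau_0,\tau_1\in A_m$ that permute only among these fixed points, set $h_b:=\tau_bh\tau_b^{-1}$, and declare a strand to encode bit $b$ when it sits in state $h_b$. The remaining elements of $[h]$ are ``garbage.'' Next, I would exhibit short braid gadgets that implement a universal set of reversible classical gates (e.g.\ Toffoli) on encoded strands. Via~\eqref{eq:fluxon-permutation}, the generator $\sigma_i$ acts as $(g_1,g_2)\mapsto(g_2,g_2g_1g_2^{-1})$; because $A_m$ is non-abelian simple for $m\geq 5$, iterated conjugations of this form generate a rich enough permutation group on tuples of $[h]$ to realize any reversible Boolean gate, in the spirit of the universal gadgets of Ogburn--Preskill and Mochon. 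The four-fixed-point slack is what separates ``control,'' ``target,'' and ``ancilla'' labels cleanly enough to keep the gadgets local.

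Given these gadgets, I would compile the randomized poly$(n)$-time computation into a reversible circuit $C$ of poly$(n)$ gates acting on random input bits plus workspace, ending in a single accept flag, and realize $C$ by concatenating the corresponding braid gadgets. Random input bits are supplied for free: each input pair begins as $\ket{\Phi}$, and under the encoding its effective content is uniformly distributed on $\{h_0,h_1\}$ up to garbage contributions. A final reversible uncomputation pass restores workspace strands to their initial encoded values precisely on accepting tapes, and on rejecting tapes pushes the accept-flag strand outside $\{h_0,h_1\}$ so the overall configuration leaves $D$; thus $\Pl_\Lambda(\theta)$ agrees with $P_s$ up to contamination by garbage.

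The main obstacle is controlling that contamination: a fraction $1-2/|[h]|$ of raw inputs into each pair is garbage and can spuriously re-enter $D$ after $\pi_\theta$. I would suppress this by redundantly encoding each logical bit in a block of $k=\mathrm{poly}(1/\epsilon)$ strand pairs together with ``consistency'' gadgets that drive the final configuration out of $D$ unless every logical bit has at least one clean copy; a Chernoff/union bound then forces the spurious return probability below $\epsilon$. Bounding this blow-up at poly$(n,1/\epsilon)$ strands and crossings, and writing down the Toffoli gadget explicitly as a word in the $\sigma_i$ acting by~\eqref{eq:fluxon-permutation}, is where the bulk of the technical work lies.
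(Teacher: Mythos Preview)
Your opening paragraph correctly identifies the key reduction: in a fluxon irrep the braid acts as a permutation and $\Pl_\Lambda(\theta)$ is the return probability of a uniformly random diagonal configuration. From there, however, your approach diverges from the paper's and runs into a genuine gap.

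The central technical difficulty is obtaining \emph{group constants}: strands that carry a specific, known element of $[h]$ rather than a uniformly random one. The Ogburn--Preskill and Mochon constructions you invoke all assume access to ancilla anyons in fixed states; the Maurer--Rhodes/Barrington results that let you write a Toffoli as a word in a simple group likewise require constants in the word, not just variables. Without constants, the braid generators act only by conjugation \emph{by the variable strand values themselves}, and it is not true (and you do not argue) that this restricted action suffices to implement controlled gates on your $\{h_0,h_1\}$ encoding. Your garbage-suppression scheme is circular for the same reason: the ``consistency gadgets'' that would compare copies, detect garbage, or extract a clean copy are themselves reversible gates that need constants to build. Moreover, even granting such gadgets, the braid is a \emph{fixed} permutation that cannot know which of the $k$ copies in a block happens to be clean, so ``use the clean copy'' is not an operation you can compile.

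The paper takes a completely different route that sidesteps the bit-encoding idea. It works directly with $d$its where $d=|C|$, so the plat's uniform superposition is exactly the random input register; the only thing missing is a supply of strands pinned to specific values $c_1,\ldots,c_m$ (the ``zero'' $d$its and the constants needed for Maurer--Rhodes). It manufactures these by adding extra strand pairs braided so that, under the Wirtinger presentation, the homomorphisms of the resulting knot group into $C$ force those strands to take the values $c_i$ up to $\Aut(G)$. Spurious solutions are then suppressed by an amplification step: for each unwanted solution $\vec d$, one adjoins an equation $y=x_1^{w(\vec x)y}$ where the word $w$ (whose existence uses simplicity of $A_m$, Lemma~\ref{lem:solutions}) satisfies $w(\vec d)=1$ but $w(\vec c)=\alpha$, and $\alpha$ is chosen (Lemma~\ref{lem:alternating}, using the four fixed points) so that the equation has multiple solutions at $\vec c$ but a unique one at $\vec d$. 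Iterating this makes the desired constants dominate the count of homomorphisms to within $\epsilon$. This ``constants via amplified Wirtinger relations'' construction is the heart of the proof, and it is entirely absent from your proposal.
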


\begin{proof}
We use the following model for randomized computation. Let $n$ be the input size. We consider $\text{poly}(n)$ $d$-level systems ($d$its), where $d$ could be exponential in $n$; we label the levels $0$ through $d-1$. The computation begins with $k=\text{poly}(n)$ $d$its in the $0$ state and $\ell=\text{poly}(n)$ random $d$its. Reversible computation is then carried out on this input state using the Toffoli gate. We may organize the computation so that acceptance is indicated by the final value of the first $d$it, adopting the convention that the value $0$ corresponds to acceptance. Following the computation, we involve a untouched ``conclusion'' $d$it in the $0$ state and XOR the first $d$it onto it. This is followed by reversing the computation to recover the initial state (without touching the ``conclusion'' $d$it). With these conventions, the computation accepts precisely when the final state is the same as the initial state. Let the deterministic reversible circuit be represented by the matrix $M$ and the random $d$its be $r_1\dots r_\ell$. Then the probability of success is
\[
\Pr[\text{Accept}]=\frac{1}{d^\ell}\sum_{r_1,\dots, r_\ell}\bra{0^kr_1\dots r_\ell} M \ket{0^kr_1\dots r_\ell}\,.
\]
Since $M$ is deterministic, it is a permutation matrix and we can rewrite the probability of success as
\[
\Pr[\text{Accept}]=\frac{1}{d^\ell}\sum_{\substack{r_1,\dots, r_\ell\\ s_1,\dots, s_\ell}}\bra{0^kr_1\dots r_\ell} M \ket{0^ks_1\dots s_\ell}=\bra{\phi}M\ket{\phi}\,,
\]
where $\ket{\phi}=\frac{1}{\sqrt{d^\ell}}\sum_{r_1\dots r_\ell}\ket{0^kr_1\dots r_\ell}$.

In order to relate the success probability to plat closure of a braid where the strands are colored by fluxon irreps (or conjugacy classes of $G$), we use the Ogburn-Preskill encoding~\cite{OgburnPreskill}. We take as the $d$ level system a pair of strands closed in a plat, so that the state on the pair is $\frac{1}{\sqrt{d}}\sum_{g\in C}\ket{g,g^{-1}}=\frac{1}{\sqrt{d}}\sum_r \ket{r}$ if the conjugacy class $C$ is of size $d$. Therefore, pairs of wires closed in a plat can be used as random $d$its. With this encoding, we have that the probability of success is 
\[
\Pr[\text{Accept}]=\frac{1}{d^\ell}\sum_{\substack{g_1,\dots, g_\ell\\h_1,\dots ,h_\ell}}\bra{(c_1,c_1^{-1})^kg_1g_1^{-1}\dots g_\ell g_\ell^{-1}} M \ket{(c_1,c_1^{-1})^k h_1 h_1^{-1}\dots h_\ell h_\ell^{-1}}\,,
\]
where the $g_i$ and the $h_i$ run over the entire conjugacy class and $\ket{c,c^{-1}}$ represents the $\ket{0}$ state. These can be thought of as variables and the $d$its in the zero state as group constants since they need to be in a specific state. If there were no need for $d$its in the zero state and if $M$ could be realized as a braid, then it is easy to see that the above expression is exactly the plat closure. In order to realize $M$ as a braid, however, we will need to have access to group constants. Therefore, group constants will be necessary for both implementing the circuit and for enforcing an appropriate starting state. Now, if group constants are available, then $M$ can be realized as a braid if the group is sufficiently strong (such as a simple group). It has been shown by \citet{MaurerRhodes} that in simple groups, any function can be realized as a word in the variables and group constants. To construct only the Toffoli gate, less rich groups might work as well (see \cite{Barrington}).

In order to produce group constants with plat closure, our strategy is to introduce more strands and braid them so that when they are closed in a plat, the only solution to the resultant set of equations over the group, is the set of group constants we need. We can then normalize the plat closure so that this part of the braid contributes unity. Assume that the constants needed are $c_1, c_1^{-1}, \dots c_m, c_m^{-1}, (c_1, c_1^{-1})^k$, where $c_1, \dots , c_m$ generate the group (we assume that $G$ has a conjugacy class which contains its own inverses and can generate the group).

The remainder of the proof focuses on the issue of generating group constants. We begin with a brief discussion of the Wirtinger presentation of the knot group and its relationship to the invariants associated with fluxon irreps.

\paragraph{The invariant at a fluxon irrep} 
Recall the Wirtinger presentation of the knot group $\mathcal{K}(K) = \pi_1(\R^3
\setminus K)$ of a knot $K$ (or link): beginning with a knot
diagram\footnote{Recall that a \emph{knot diagram} is a 2-dimensional
  projection of a knot in general position, so that no three lines
  intersect at a point, with explicit annotations that determine, for
  each crossing, which line occludes the other.
  Figure~\ref{fig:stroke} is a knot diagram of the
  unknot. Figure~\ref{fig:crossing} is an example of the typical
  indication of a crossing.} for $K$,
one introduces an orientation for the knot and a generator $x_s$ for
each \emph{stroke} of the diagram (that is, each connected arc of the
diagram, such as the red component of the knot diagram of
Figure~\ref{fig:stroke}). With each crossing of the diagram, such as
the one pictured in Figure~\ref{fig:crossing} involving the labeled
strokes $x_t$, $x_a$, and
$x_b$, one introduces the relation
\begin{equation}\label{eq:Wirtinger-relation}
x_b =  x_t^{-1}\,
x_a \,x_t \,.
\end{equation}
(One can recover this relation by associating each variable $x_s$ with
the path from a fixed base point that travels around the stroke $s$ in
a direction consistent with the right-hand rule and recording the
effect of ``pushing'' a loop underneath the top strand.) The quotient
of the free group generated by the $\{ x_s\}$ by these generators is
isomorphic to the knot group.

As the knot group is an invariant of the knot, one can immediately
obtain other (weaker) invariants as functions of the knot group. We focus
on
\[
h(K,G) = \bigl| H(K,G)\bigr| \quad\text{where}\quad H(K,G) = \{ \phi : \mathcal{K}(K) \rightarrow G \mid
\text{$\phi$ a homomorphism}\}\,,
\]
the number of homomorphisms of $\mathcal{K}(K)$ into the group
$G$. With the Wirtinger presentation described above, the quantity
$h(K,G)$ can be expressed as the number of maps $h: \{ x_s \}
\rightarrow G$ that satisfy the
relations~\eqref{eq:Wirtinger-relation} in the sense that
\[
\phi(x_c) = \phi(x_t)^{-1} \phi(x_c) \phi(x_t)
\]
for each crossing. Observe that any such map necessarily carries the
generators $\{ x_s\}$ into a single conjugacy class of $G$, and we refine the notation above by focusing on those maps associated with a particular conjugacy class $C$ of $G$:
\[
h(K; C, G) = \bigl| H(K; C, G) \bigr| \quad \text{where}\quad H(K; C, G) =  \{ \phi : \mathcal{K}(K) \rightarrow G \mid
\text{$\phi$ a homomorphism, $\forall x_s, \phi(x_s) \in C$}\} \,,
\]
which we abbreviate $h(K,C)$ (and $H(K,C)$) when $G$ can be inferred from context.

As discussed above, a braid $b \in B_{2n}$ induces a link $L(b)$ via
the plat closure, and any (efficiently presented) knot can be given
such a description (as the knot induced from the plat closure of a braid) in a
straightforward fashion. For a group $G$, and the
fluxon representation $\Lambda = (C,1)$ of $D(G)$, this yields the knot
invariant $\Pl_\Lambda(\theta)$ (where $\theta$ is a braid that yields the knot under the plat closure). For this combinatorial case (arising for a fluxon representation), we have
\[
\Pl_\Lambda = h(\mathcal{K}(K); C, G)\,.
\]
With this equality in place, we shall argue about $h(\mathcal{K}(K); C)$ rather than $\Pl_\Lambda$.

\begin{figure}[ht]
\centering
\begin{subfigure}[b]{0.45\textwidth}
  \centering
  \begin{tikzpicture}[scale=.2]
    \draw[tanglea] (-8,-8) -- (8,8);
    \draw[tanglea] (-8,8) -- (8,-8);
    \draw[tangle] (6,-3) node {$x_t$};
    \draw[tangle] (6,3) node {$x_b$};
    \draw[tangle] (-6.5,-3) node {$x_a$};
  \end{tikzpicture}
  \caption{A crossing in a (oriented) knot diagram.}
  \label{fig:crossing}
\end{subfigure}
\begin{subfigure}[b]{0.45\textwidth}
  \centering
  \begin{tikzpicture}[style=tangle,scale=.1]
    \draw[double] (-10,0) -- (10,0); \draw[double] (-10,0) -- (0,17.3); \draw[double] (0,17.3) -- (10,0);
    
    \draw[double,style=tanglered] (10.5,0) .. controls +(20.5,0) and +(10,17.3) .. (0.5,17.8);
    \draw[double] (-10.5,0) .. controls +(-20,0) and +(-10,17.3) .. (0,17.3);
    \draw[double] (10,0) .. controls +(10,-17.3) and (-20,-17.3) .. (-10,0);
  \end{tikzpicture}
  \caption{A stroke in a knot diagram.}
  \label{fig:stroke}
\end{subfigure}

    
    
\caption{Knot diagrams, strokes, and the Wirtinger presentation.}
\end{figure}
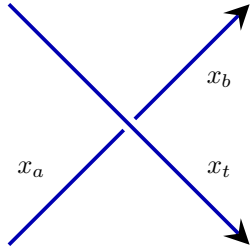
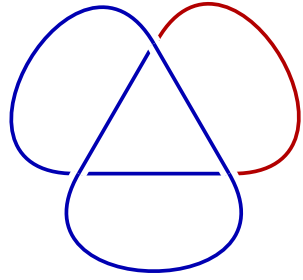

\paragraph{Generating groups constants}
As mentioned above, we assume that $G$ is a finite group and that $C = \{ h^g \mid g \in G\}$ is a conjugacy class that generates $G$. We begin by showing how to construct a knot that---with controlled error---permits us to distinguish a particular set of generators of $G$ (up to automorphism).

Let $\vec{c} = (c_1, \ldots, c_t)$ denote a sequence of elements of $C$ that generate $G$ and
\[
[\vec{c}] = \{ \phi(\vec{c}) = (\phi(c_1), \ldots, \phi(c_k)) \mid \phi \in \Aut(G)\}\,.
\]
Our goal is to construct a knot $k_{\vec{c}}$, together with a distinguished collection of strokes $(s_1, \ldots, s_k)$, with the property that uniform selection of a homomorphism $\psi$ from $H(k,C)$, with overwhelming probability, yields one for which $\psi(s) = (\psi(s_1), \ldots, \psi(s_k)) \in [\vec{c}]$. (Observe that $\Aut(G)$ acts on the collection of legal homomorphisms, so our demand---that the resulting knot distinguish a particular orbit under the $\Aut(G)$ action---is the strongest requirement of this type on which we can insist upon.)

The construction begins with a collection of $k$ (oriented) circles; see Figure~\ref{fig:circles}. Without further constraints, the knot group of this link is free: legal maps $\psi$ may assign arbitrary elements of $C$ to each circle. We introduce constraints among the circles with \emph{band connections}; see Figure~\ref{fig:band-connection}. A \emph{band} is a pair of strands that are constrained to be equal (under any homomorphism into $C$), but are oppositely oriented. They are convenient because of the simple effect they have upon crossing other strokes:
\begin{enumerate}
\item Should a band cross beneath another stroke, as in Figure~\ref{fig:band-under}, the constraints of~\eqref{fig:crossing} preserve the band condition: if the pair of strokes on the left side form a band (are constrained to take equal values under any homomorphism), the same is true of the two strokes on the right. Furthermore, the values assumed by these two bands are related by conjugation of the value of the stroke under which they cross.
\item Should a band cross over another stroke, effectively dividing it into three strokes as in Figure~\ref{fig:band-under}, the constraints of~\eqref{fig:crossing} demand that the ``incoming'' and ``outgoing'' values assumed by the stroke so crossed over are the same, as though the crossing had never happened. The small interior stroke will take on another (conjugate) value, but we shall never allow these to interact with other strokes.
\end{enumerate}

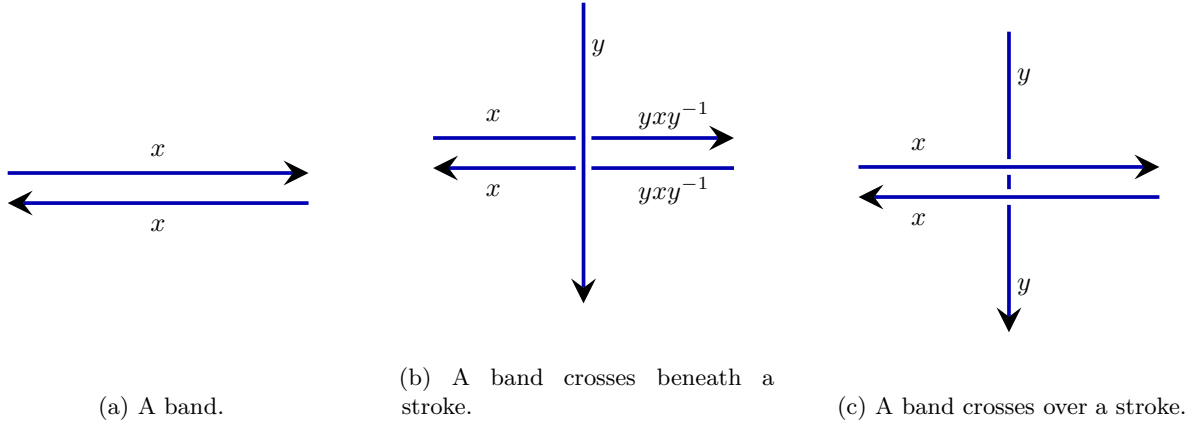
\begin{figure}[ht]
\centering
\begin{subfigure}[b]{0.3\textwidth}
\centering
\begin{tikzpicture}[scale=.2]
\draw[tangle] (0,2.5) node {$x$};
\draw[tangle] (0,-2.5) node {$x$};
\draw[tanglea] (-10,1) -- (10,1);
\draw[tanglea] (10,-1) -- (-10,-1);
\phantom{\draw[double] (0,-10) -- (0,-10);}
\end{tikzpicture}
\vspace{5mm}
\caption{A band.}
\label{fig:band}
\end{subfigure}\qquad
\begin{subfigure}[b]{0.3\textwidth}
\centering
\begin{tikzpicture}[scale=.2]
\draw[tangle] (-6,2.5) node {$x$};
\draw[tangle] (-6,-2.5) node {$x$};
\draw[tangle] (1,7) node {$y$};
\draw[tangle] (6,2.5) node {$y x y^{-1}$};
\draw[tangle] (6,-2.5) node {$y x y^{-1}$};
\draw[tanglea] (-10,1) -- (10,1);
\draw[tanglea] (10,-1) -- (-10,-1);
\draw[tanglea] (0,10) -- (0,-10);
\end{tikzpicture}
\vspace{5mm}
\caption{A band crosses beneath a stroke.}
\label{fig:band-under}
\end{subfigure}\qquad
\begin{subfigure}[b]{0.3\textwidth}
\centering
\begin{tikzpicture}[scale=.2]
\draw[tangle] (-6,2.5) node {$x$};
\draw[tangle] (-6,-2.5) node {$x$};
\draw[tangle] (1,7) node {$y$};
\draw[tangle] (1,-7) node {$y$};
\draw[tanglea] (0,10) -- (0,-10);
\draw[tanglea] (10,-1) -- (-10,-1);
\draw[tanglea] (-10,1)  -- (10,1);
\end{tikzpicture}
\vspace{5mm}
\caption{A band crosses over a stroke.}
\label{fig:band-over}
\end{subfigure}
\caption{Bands and band crossings.}
\end{figure}

Since $c_1, \dots ,c_m$ are conjugates, we can write each $c_i$ as $g^{-1}c_1g$, for a group element $g$. As the $c_i$ generate the group, however, $g$ may be written as a word in the $c_i$ and we have, for each $i$ an equation
\begin{equation}\label{eqn:conjugates}
c_i=c_1^{v_i}\,,
\end{equation}
where $v_i$ is a word in the $c_i$ and their inverses. Introducing a family of indeterminates $x_1, \ldots, x_k$ and replacing each appearance of $c_i$ in~\eqref{eqn:conjugates} with the variable $x_i$, we conclude that the $c_i$ satisfy the equations
\begin{equation}\label{eqn:variable-conjugates}
x_i=x_1^{w_i}\,,
\end{equation}
where $w_i$ is the word in the $x_i$ and their inverses obtained from $v_i$. We now adopt a construction of \citet{Homomorphs}, who shows how to construct a knot that imposes these relations. Each such equation $x_i = x_1^{w_i}$ is imposed by introducing a band between the circle associated with $x_1$ and the circle associated with $x_i$ (see Figure~\ref{fig:band-connection}) and passing it through the circles associated with the variables appearing in $w_i$; by passing such a band ``through'' such a circle---one over and once under as in Figure~\ref{fig:relation}---the generator associated with the circle operates on the band value by conjugation while leaving the generator associated with the circle unaffected. In general, the words $w_i$ may contain references to the variable $x_i$, in which case one passes the band through the circle associated with $x_i$, as in Figure~\ref{fig:self-equation}.

This construction produces a knot we call $K_c^1$. By construction, associating the strokes the $k$ original circles with the variables $x_i$, there is a homomorphism $\psi$ of $\mathcal{K}(K_c^1)$ into $C \subset G$ for which $\psi: x_i \mapsto c_i$. As remarked above, if $\phi \in \Aut(G)$ which fixes the conjugacy class $C$, then $\phi \circ\psi$ is also an element of $H(K_c^1, C)$. However, there will, in general, be other solutions to the equations~\eqref{eqn:variable-conjugates}: in particular, the map which carries all strokes of the knot to a particular, fixed element $c \in C$ satisfies the equations~\eqref{eqn:variable-conjugates}. Our goal is to add some strands to the knot so as to ``amplify'' the solution of interest---the map $\psi: x_s \mapsto c_i$---so that it appears with overwhelming majority if an element of selected at random from $H(\cdot, C)$.


\begin{figure}[ht]\centering
\begin{subfigure}[b]{0.45\textwidth}
\begin{tikzpicture}[scale=.15,radius=5]
\draw[tanglea] (-25,0) arc[start angle=0, end angle=360];
\draw[tangle] (-30,0) node {\textsl{$x_1$}};
\draw[tanglea] (-10,0) arc[start angle=0, end angle=360];
\draw[tangle] (-15,0) node {\textsl{$x_2$}};
\draw[tangle] (-4,0) node {\textsl{$\cdots$}};
\draw[tanglea] (10,0) arc[start angle=0, end angle=360];
\draw[tangle] (5,0) node {\textsl{$x_k$}};
\end{tikzpicture}
\vspace{8mm}
\caption{Initial circles.}
\label{fig:circles}
\end{subfigure}
\qquad
\begin{subfigure}[b]{0.45\textwidth}
\centering
\begin{tikzpicture}[scale=.15,radius=5,style=tangle]
\draw[tanglea] (-10,0) arc[start angle=0, end angle=270];
\draw[double] (15,5) arc[start angle=90, end angle=-180];

\draw[double] (-15, -5) .. controls +(5,0) and +(-5,-9) .. (0,-1);
\draw[double] (0, -1) .. controls +(5,9) and +(0,6) .. (10,0);

\draw[double] (-10,0) .. controls +(0,-6) and +(-5,-9) .. (0,1);
\draw[double] (0,1) .. controls +(5,9) and +(-5,0) .. (15,5);
\end{tikzpicture}
\caption{A band between two circles.}
\label{fig:band-connection}
\end{subfigure}

\vspace{1cm}
\begin{subfigure}[b]{0.45\textwidth}
\begin{tikzpicture}[scale=.15,radius=5,style=tangle]
\draw[tanglea] (-15,0) arc[start angle=0, end angle=270];
\draw (-20,0) node {\textsl{$x$}};

\draw[double] (-0,0) arc[start angle=0, end angle=270];
\draw (-5,0) node {\textsl{$y$}};

\draw[double] (-20, -5) .. controls (-15,-5) and (-10,-11) .. (10,-11);
\draw[double] (-15,0) .. controls (-15,-5) and (-10,-10) .. (10,-10);

\draw[double] (0,0) .. controls (0,-5) and (5,-2) .. (10,-2);
\draw[double] (-5,-5) .. controls (-3,-5) and (8,-3) .. (10,-3);

\draw[tanglea] (15,0) arc[start angle=0, end angle=360];
\draw (10,0) node {\textsl{$z$}};

\draw[double] (10,-2) .. controls (16.25,-2) and (16.25,-11) .. (10,-11);
\draw[double] (10,-3) .. controls (15,-3) and (15,-10) .. (10,-10); 
\end{tikzpicture}
\caption{A simple relation: $x^z = y$.}
\label{fig:relation}
\end{subfigure}
\qquad
\begin{subfigure}[b]{0.45\textwidth}
\begin{tikzpicture}[scale=.15,radius=5,style=tangle]
\draw[tanglea] (-15,0) arc[start angle=0, end angle=270];
\draw (-20,0) node {\textsl{$x$}};

\draw[double] (-0,0) arc[start angle=0, end angle=270];
\draw (-5,0) node {\textsl{$y$}};

\draw[double] (-20, -5) .. controls (-15,-5) and (-10,-3) .. (-5,-3);
\draw[double] (-15,0) .. controls (-15,-5) and (-10,-2) .. (-5,-2);


\draw[double] (-5, -3) .. controls (.5,-3) and (-2,-11) .. (10,-11);
\draw[double] (-5,-2) .. controls (2,-2) and (-2,-10) .. (10,-10);

\draw[double] (0,0) .. controls (0,-5) and (5,-2) .. (10,-2);
\draw[double] (-5,-5) .. controls (-3,-5) and (8,-3) .. (10,-3);

\draw[tanglea] (15,0) arc[start angle=0, end angle=360];
\draw (10,0) node {\textsl{$z$}};

\draw[double] (10,-2) .. controls (16.25,-2) and (16.25,-11) .. (10,-11);
\draw[double] (10,-3) .. controls (15,-3) and (15,-10) .. (10,-10); 
\end{tikzpicture}
\caption{The equation $x^{zy} = y$.}
\label{fig:self-equation}
\end{subfigure}
\caption{Building a set of generators.}
\end{figure}
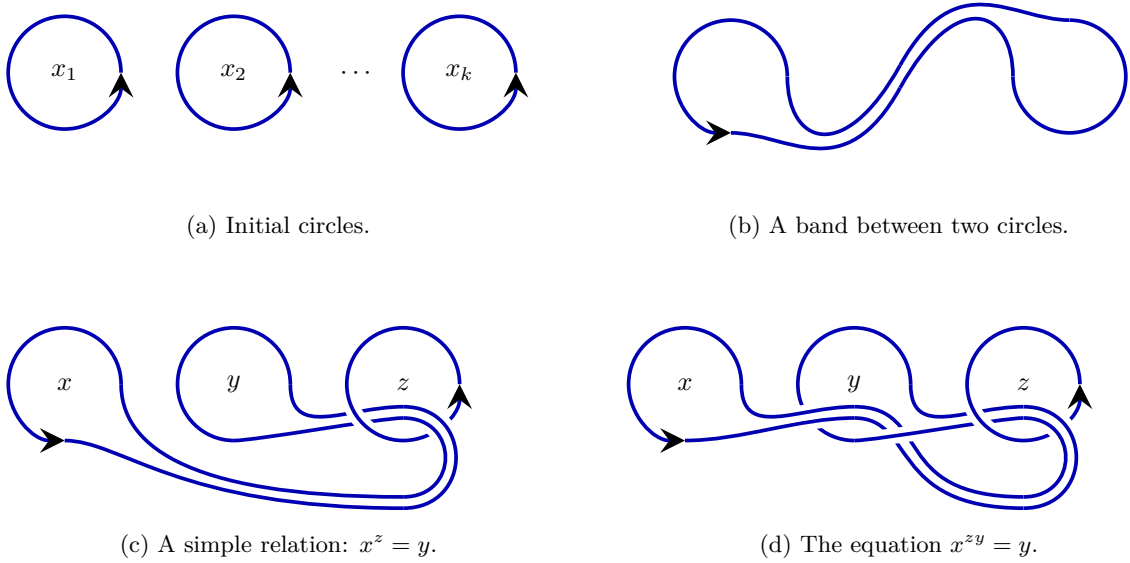

\newcommand{\new}{\operatorname{new}}

To minimize the contribution from solutions not related by an automorphism, consider a undesirable solution $\vec{d} = (d_1, \ldots, d_k)$ (for which there is an homomorphism $\psi': x_i \mapsto d_i$ and yet $d \not\in [c]$). We now introduce a new circle, associated with the variable $y_{\new}$, and as before introduce an equation of the form
\begin{equation}\label{eq:eq-suppress}
y_{\new}=x_1^{w(\vec{x}) y_{\new}}\,,
\end{equation}
where $w(\vec{x})$ is a word in the $x_i$. In order to ``suppress'' the homomorphism associated with $\vec{d}$, we shall choose $w$ in such a way that
\begin{enumerate}
\item $w(\vec{d}) = 1$: thus any homomorphism $\psi'$ which carries $x_i$ to $d_i$ must necessarily carry $y_{\new}$ to the same value as $x_1 = d_1$. (Specifically, the only solution to the equation $y_{\new} = x_1^{w(d)y_{\new}} = x_1^{y_{\new}}$ is $y_{\new} = x_1$.)
\item\label{item:multiple} $w(\vec{c}) = \alpha$, a group constant for which the equation $y_{\new} = x_1^{\alpha y_{\new}}$ has multiple solutions. In this case, the homomorphism $\psi: x_i \mapsto c_i$ may be extended in multiple ways with an assignment to $y_{\new}$ (any solution to $y_{\new} = x_1^{\alpha y_{\new}}$ will suffice).
\end{enumerate}
Lemma~\ref{lem:alternating} below guarantees that there is a nontrivial element $\alpha \in A_n, n \geq 6$ satisfying item~\ref{item:multiple} above. Lemma~\ref{lem:solutions} below guarantees that when $G$ is simple (i.e., has no nontrivial normal subgroups), there is a word $w$ in the variables $x_1, \ldots, x_k$ so that $w(d) = 1$ and $w(c) = \alpha$ for any fixed $\alpha \in G$. Focusing now on a specific group $G = A_n, n \geq 6$, let $\alpha$ be the element which maximizes the number of solutions to the equation $y = c_1^{\alpha y}$. It follows immediately that, for the new knot $K'$ induced from this process,
\[
|\{ \phi \in H(\mathcal{K}(K'); C, G)  \mid \psi(x_i) = c_i \}| \geq 2\,|\{ \phi \in H(\mathcal{K}(K'); C, G) \mid \psi(x_i) = d_i \}|\,.
\]
Repeating this process $\ell$ times for each of the offending vectors $\vec{d}$ induces a knot $K_c^\ell$ so that for any $\vec{d} \not\in [\vec{c}]$ we have
\[
|\{ \phi  \in H(\mathcal{K}(K^\ell_c); C, G) \mid \psi(x_i) = c_i \}| \geq 2^\ell\,|\{ \phi  \in H(\mathcal{K}(K^\ell_c); C, G) \rightarrow C \mid \psi(x_i) = d_i \}|\,.
\]
(Observe that, as presented, this is only efficient if $G = A_n$ has constant size.)

\begin{lemma}\label{lem:alternating}
Consider the alternating group on $m$ symbols $A_m$ and consider any conjugacy class with at least $4$ fixed points, say, $m-3$, $m-2$, $m-1$ and $m$. Since the conjugacy class is non-trivial, it has a $k$ cycle with $k\geq 2$. Pick $\alpha$ in this class to have $(1\,2\,\dots\,k)$ in that $k$ cycle. Now, consider the equation $y=\alpha^{(1\,k)(m-2\,m-3)y}$. This equation has at least two solutions.
\end{lemma}
\begin{proof}
It can easily be checked that two solutions are given by elements that differ from $\alpha$ only in the $k$ cycle. The first one contains $(1\,2\,\dots\,k-1\,m)$ and the second $(1\,2\,\dots\,k-1\,m-1)$ in that $k$ cycle.
\end{proof}


\begin{lemma}\label{lem:solutions}
  Let $G$ be a finite group and $F_k$ be the free group on $k$ generators $\{x_1, \ldots, x_k\}$. For an element $\vec{d} = (d_1, \ldots, d_k) \in G^k$, let $\phi_\vec{d}$ be the evaluation homomorphism that carries $x_i$ to $d_i$ and define
  \[
  A_\vec{d} = \ker \phi_\vec{d} = \{ w \in F_k \mid \phi_\vec{d}(w) = 1 \}\,.
  \]
  For two elements $\vec{c}, \vec{d} \in G^k$, define $C$ to be the subgroup generated by the $c_i$, $D$ to be the subgroup generated by the $d_i$, and 
  \[
  A_\vec{d}(\vec{c}) = \{ \phi_\vec{c}(w) \mid w \in A_{\vec{d}}\}\,.
  \]
  Then
  \begin{enumerate}
  \item \label{part:normal} $A_\vec{d}(\vec{c})$ is normal in $C$. In particular, when the $c_i$ generate $G$, $A_\vec{d}(\vec{c})$ is normal in $G$.
  \item \label{part:nontrivial} If $\vec{c}$ and $\vec{d}$ are not related by a homomorphism of $G$ (which is to say that no homomorphism from $D$ to $C$ carries each $c_i$ to  $d_i$) then $A_\vec{d}(\vec{c}) \neq 1$.
  \end{enumerate}
\end{lemma}

\begin{proof}
For part~\ref{part:normal}, $\ker(\phi_\vec{d})$ is certainly normal in $F_k$ since $\phi_\vec{d}$ is a group homomorphism. Let $C$ be the subgroup of $G$ generated by the $c_i$, and $g \in C$; we wish to show that $g^{-1}A_\vec{d}(\vec{c})g = A_\vec{d}(\vec{c})$. If $w \in F_k$ is a word in the free group for which $\phi_\vec{c}(w) = g$, we have $w^{-1} A_\vec{d} w = A_\vec{d}$ and hence that 
\[
g^{-1} A_{\vec{d}}(\vec{c}) g = \phi_{\vec{c}}(w^{-1} A_\vec{d} w) = \phi_{\vec{c}}(A_\vec{d}) = A_{\vec{d}}(\vec{c})\,,
\]
as desired.

As for part~\ref{part:nontrivial}, let $C$ denote the subgroup generated by $\{ c_i\}$ and $D$ the subgroup generated by $\{ d_i\}$. Observe that if $A_\vec{d}(\vec{c})$ is trivial we have $\ker \phi_\vec{d} \subset \ker \phi_\vec{c}$. In this case, the natural quotient map 
\[
q_{\vec{c}}: D \cong F/\ker \phi_\vec{d} \rightarrow C \cong F/\ker \phi_\vec{c}
\]
yields a homomorphism $\psi: D \rightarrow C$ for which $\psi: d_i \mapsto c_i$. To be precise, let $q_\vec{c}$ denote the quotient map $q_{\vec{c}}: F/\ker \phi_\vec{d} \rightarrow F/\ker \phi_\vec{c}$ and let $i_\vec{d}$ denote the inverse of the isomorphism $\phi_\vec{d}$ induces from $F/\ker \phi_{\vec{d}}$ to $D$; observe that $i_\vec{d}(d_i) = x_i \;(\ker \phi_\vec{d})$. Then the map $\psi = \phi_\vec{c} \circ q_{\vec{c}} \circ i_{\vec{d}}$ is a homomorphism of $D$ onto $C$ that carries each $d_i$ onto $c_i$; see Figure~\ref{fig:homomorphism}.

\begin{figure}[ht]
\begin{center}
\begin{tikzpicture}
    \node (D) at (0,0) {$D$};
    \node (quotient-D) at (3,0) {$F/\ker \phi_\vec{d}$};
    \node (quotient-C) at (6,0) {$F/\ker \phi_\vec{c}$};
    \node (C) at (9,0) {$C$};
    \path[->] (D) edge [bend right] node [below] {$i_{\vec{d}}$} (quotient-D);
    \path[->] (quotient-D) edge [bend right] node [above] {$\phi_\vec{d}$} (D);
    \path[->] (quotient-D) edge node [above] {$q_\vec{c}$}  (quotient-C);
    \path[->] (quotient-C) edge node [above] {$\phi_\vec{c}$} (C);
\end{tikzpicture}
\end{center}
\caption{The homomorphism $\psi: d_i \mapsto c_i$.}
\label{fig:homomorphism}
\end{figure}

    
  \end{proof}
  The above lemma implies that if $G$ is simple then $H_1=G$ ($H_1$ cannot be trivial because $\vec{c}$ and $\vec{d}$ are not related by an automorphism). This means that there exists a word $w$ in variables $x_i$ such that $w(\vec{c})=z$ and $w(\vec{d})=1$.
\end{proof}

\begin{proof}[Proof of Theorem~\ref{Thm}]
The first two follow from the definitions of $\BPP$ and $\SBP$ and the following lemmas. The $\sP$-completeness follows from the fact that exact evaluation of the success probability of a randomized computation is $\sP$-complete.
\end{proof}

\section{Quantum computation with anyons}\label{QC:anyons}
Anyons are particles which exist in two dimensions and have exotic statistics. Anyons are useful for quantum computation because quantum information can be stored on a system of anyons in a non-local fashion. This means that local errors do not corrupt the quantum information and a computer based on anyons will be inherently fault tolerant to local errors. For a tutorial on quantum computation using anyons, see the notes by Preskill~\cite{PreskillNotes}. In this section, we give short introduction to anyons described by $\D(G)$ for a finite group $G$. Then we show how one can use the QFT over $\D(G)$ to simulate an anyon computer efficiently. This has been shown in \cite{PreskillNotes, Mochon1}, but it was assumed that the dimension of the irreps of $\D(G)$ are of constant size. Here we use the QFT over $\D(G)$ to simulate an anyon computer in potentially large irreps of $\D(G)$.

The Hilbert space of an anyon (whose symmetries are described by $\D(G)$) is an irreducible representation of $\D(G)$. Recall that irreps of $\D(G)$ are characterized as $(h,\rho)$, where $h$ is a representative element of a conjugacy class and $\rho$ is an irrep of $Z(h)$. The group elements $h$ are called fluxes and the irreps $\rho$ are called charges. An anyon, in general, has both flux and charge. Anyons which transform as $(h,\text{tr})$, where tr is the trivial irrep of $Z(h)$, are called \emph{fluxons}. On the other hand, if we pick $h$ to be the identity element of the group, then $Z(h)=Z(e)=G$. Anyons described by the irreps $(e,\rho)$, where $\rho$ is an irrep of $G$, are called \emph{chargeons}. Recall from \eqref{R:InsideIrrep} that the action of the $R$ matrix on a pair of anyons is
\begin{equation}R\ket{g_1,v_1}\otimes\ket{g_2,v_2} = \ket{g_2g_1g_2^{-1}, \rho(k_{g_2g_1g_2^{-1}}^{-1}g_2k_{g_1})v_1}\otimes\ket{g_2,v_2}.
\end{equation}
In the special case of fluxons, this action reduces to
\begin{equation}R\ket{g_1}\otimes\ket{g_2} = \ket{g_2g_1g_2^{-1}}\otimes\ket{g_2} .
\end{equation}
This is the action when we wind the first anyon around the second in the anticlockwise direction. The braid operator $TR$ is given by
\begin{equation}B\ket{g_1}\otimes\ket{g_2} = \ket{g_2}\otimes\ket{g_2g_1g_2^{-1}} .
\end{equation}
The action on chargeons can be determined similarly. Given an irrep $(g,\rho)$, the conjugate irrep is $(g^{-1},\bar{\rho})$, where $\bar{\rho}$ is the conjugate irrep of $\rho$. The action of $\bar{\rho}$ is simply the complex conjugation of the action of $\rho$. Given two anyons in conjugate irreps, the state of trivial total flux and charge is the maximally entangled state
\[
\ket{\Phi} = \frac{1}{\sqrt{|C|d_\rho}}\sum_{g,v} \ket{g,v}\otimes\ket{g^{-1},v^\ast},
\]
where the sum is over all $g$ in the conjugacy class $C$ and all $v$ in the irrep $\rho$.

\subsection*{Simulation of anyons}

As described in~\cite{PreskillNotes}, in order to perform universal quantum computation, we need to be able to
\begin{enumerate}
\item \textit{Prepare any state in the Hilbert space of a pair of anyons which correspond to conjugate irreps.}
\item \textit{Perform braiding of anyons around each other and around ancillas.}
\item \textit{Fuse pairs of anyons and measure the flux and charge of the resulting particle.}
\end{enumerate}
It can be seen easily that in order to simulate each of these steps, one needs the Fourier transform and the Clebsch-Gordan transform over $\D(G)$. Initial state preparation can be carried out if we can construct states which lie inside pairs of irreps of $\D(G)$. To do this, we can use the Clebsch-Gordan transform. First we embed the state in the direct sum of the CG decomposition and then perform the inverse CG transform. Next, in order to perform braiding on the states, we need to implement the $R$ matrix inside the irrep. We use the same trick as before and implement the $R$ matrix in the regular representation of $\D(G)$ and use the QFT\@.  Finally, in order to simulate fusion, we again make use of the CG transform since fusion of anyons is a CG transform followed by a measurement in the computational basis. Since we know how to perform a Fourier transform over some groups, we now focus on the CG transform.

\section{Clebsch-Gordan decomposition}\label{sec:CG}
In this section, we first describe an efficient algorithm to perform the Clebsch-Gordan decomposition over $\D(G)$ for fluxon irreps. Then we give an efficient algorithm for general irreps under certain conditions. The following theorem from Curtis and Reiner \cite{CurtisAndReiner} is useful to understand the Clebsch-Gordan transform.
\begin{theorem}
Suppose we have two subgroups $H$ and $K$ of $G$ and two representations $\rho$ and $\sigma$ of $H$ and $K$ respectively. Suppose that we induce these two representations to $G$, then the tensor product of the two $G$ representations can be decomposed into a direct sum of induced representations as
\[
\rho\uparrow^G\otimes \sigma\uparrow^G \cong \bigoplus_d \left(\rho\downarrow_{H\cap K^d} \otimes \sigma\downarrow_{H\cap K^d}\right)\uparrow^{G} \,,
\]
where $d$ runs over all $(H,K)$ double coset representatives.
\end{theorem}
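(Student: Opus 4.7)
The plan is to reduce to the classical Mackey decomposition theorem by first eliminating one of the two inductions using the tensor (or ``projection'') identity. Recall that for any subgroup $H\leq G$, an $H$-representation $\rho$, and a $G$-representation $W$, one has the natural isomorphism of $G$-representations
\[
(\rho\uparrow_H^G)\otimes W \cong \left(\rho \otimes W\downarrow_H\right)\uparrow_H^G\,,
\]
realized explicitly on the basis of $\C[G]\otimes_{\C[H]} V_\rho$ by the $\C[H]$-balanced map $g\otimes v \otimes w \mapsto g\otimes (v\otimes g^{-1}w)$. I would first verify this identity in a preliminary step; the verification is a direct check of $G$-equivariance and $\C[H]$-balancing.

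Next I would apply this identity with $W=\sigma\uparrow_K^G$ to obtain
\[
\rho\uparrow^G \otimes \sigma\uparrow^G \;\cong\; \bigl(\rho \otimes (\sigma\uparrow_K^G)\downarrow_H\bigr)\uparrow_H^G\,.
\]
Now the inner object $(\sigma\uparrow_K^G)\downarrow_H$ is precisely the target of Mackey's restriction-of-induction theorem, which gives the $H$-module decomposition
\[
(\sigma\uparrow_K^G)\downarrow_H \;\cong\; \bigoplus_d \bigl(\sigma\downarrow_{H\cap K^d}\bigr)\uparrow_{H\cap K^d}^H\,,
\]
where $d$ ranges over a set of $(H,K)$-double coset representatives and $\sigma\downarrow_{H\cap K^d}$ is understood via the conjugation isomorphism $K\cong K^d$ composed with restriction. (I would cite Mackey's theorem in Curtis--Reiner rather than reproving it, though it itself follows from the double coset decomposition $G=\bigsqcup_d HdK$ applied to the coset basis of $\C[G]\otimes_{\C[K]}V_\sigma$.)

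Substituting this decomposition back, I would then apply the tensor identity a second time, now inside the group $H$ with subgroup $H\cap K^d$, to move the factor of $\rho$ inside each induced summand:
\[
\rho \otimes \bigl(\sigma\downarrow_{H\cap K^d}\bigr)\uparrow_{H\cap K^d}^H \;\cong\; \bigl(\rho\downarrow_{H\cap K^d} \otimes \sigma\downarrow_{H\cap K^d}\bigr)\uparrow_{H\cap K^d}^H\,.
\]
Finally, inducing from $H$ up to $G$ and invoking transitivity of induction, $(-)\uparrow_{H\cap K^d}^H \uparrow_H^G \cong (-)\uparrow_{H\cap K^d}^G$, collapses the two-step induction into one and yields the claimed direct sum. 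The main obstacle is purely notational: one must carefully track the action of $d$-conjugation in the Mackey step so that the subgroup $H\cap K^d$ and the representation $\sigma$ viewed on it match the conventions used in the statement; once the tensor identity is in hand, no further representation-theoretic content is required beyond Mackey's theorem and transitivity of induction.
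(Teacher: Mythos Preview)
Your proof is correct and follows the standard route to the Mackey tensor product theorem: tensor identity, Mackey restriction, tensor identity again, then transitivity of induction. There is nothing to compare against, however, because the paper does not prove this theorem; it simply quotes it from Curtis and Reiner and applies it as a black box to the centralizer subgroups arising from $\D(G)$. So your write-up actually goes beyond what the paper does for this statement.
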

We can apply this theorem to irreps of $\D(G)$ since they are all induced representations from centralizer subgroups. Suppose that we have the tensor product of two irreps of $\D(G)$, say $([g],\rho)$ and $([h],\sigma)$, then the theorem implies that we can write this as
\begin{equation}\label{CG1}
\rho\uparrow^G\otimes \sigma\uparrow^G \cong \bigoplus_d \left(\rho\downarrow_{Z(g)\cap Z(h)^d} \otimes \sigma\downarrow_{Z(g)\cap Z(h)^d}\right)\uparrow^{G} \,.
\end{equation}
However, this is still not in the form that we want. In order to obtain it, we need to understand the double coset representatives. We show that the $(Z(g),Z(h))$ double coset representatives also label the different conjugacy classes that appear in the product $[g]\cdot [h]$ in the conjugacy class algebra. To see this let the conjugacy class of $g$ be $\{g, k_1gk_1^{-1},\dots , k_n g k_n^{-1}\}$ and that of $h$ be $\{h, l_1 h l_1^{-1}, \dots , l_m h l_m^{-1}\}$, where $k_i$ and $l_i$ label the complete set of coset representatives of $Z(g)$ and $Z(h)$ respectively. Now consider all possible products of the elements of the two sets. In order to determine the different conjugacy classes that appear in the products, we only need to consider elements of the form $g l_i h l_i^{-1}$ since anything of the form $k_j g k_j^{-1} l_i h l_i^{-1}$ can be conjugated by $k_j^{-1}$ to get the former type. 

This is still not enough since two elements $g l_i h l_i^{-1}$ and $g l_j h l_j^{-1}$ could be conjugates. In that case, there must be an element $z$ of $Z(g)$ such that $z (g l_i h l_i^{-1}) z^{-1}=g (z l_i) h (z l_i)^{-1}=g l_j h l_j^{-1}$. This means that $l_j$ is in the same double coset of $(Z(g),Z(h))$ as $l_i$. In other words, the right action of $Z(g)$ on left cosets of $Z(h)$ determines the different conjugacy classes which is exactly the different $(Z(g),Z(h))$ double coset representatives. Picking a set of double coset representatives $d$, we can say that the different conjugacy classes that appear in the product $[g]\cdot [h]$ are $[g h^d]$ for all the different double coset representatives $d$.

We now need to determine the number of times each $[g h^d]$ appears in $[g]\cdot [h]$ since this determines (some of the) multiplicities in the Clebsch-Gordan decomposition. It is enough to determine the number of times $[gh]$ appears in the product $[g]\cdot [h]$ since its entire conjugacy class would appear the same number of times (for any other $d$ the procedure is the same). In order to count this, consider the two groups $Z(gh)$ and $Z(g)\cap Z(h)$. It is easy to see that the latter is a subgroup of the former. If it is a strict subgroup, then there exist elements in $Z(gh)$ which do not commute with either $g$ or $h$ or both. It turns out that in fact, any element of $Z(gh)$ either commutes with both $g$ and $h$ or does not commute with both. To see this, notice that if $s$ is an element not in $Z(g)\cap Z(h)$ and it commutes with $g$ but not $h$, then we have that $gh=(gh)^s=g^sh^s=gh^s$. But this implies that $h^s=h$ and so $s$ commutes with $h$. This means that any element in $Z(gh)$ that does not commute with $g$, also does not commute with $h$ and vice versa. 

Now for any non-trivial coset representative $s$ of $Z(g)\cap Z(h)$ in $Z(gh)$, we have that $gh=s gh s^{-1} = g^s h^s$. Since $s$ does not commute with $g$ and $h$, we have produced a pair $(g^s, h^s)$ distinct from $(g,h)$ such that their product is the same. For distinct coset representatives $s$ and $t$, the pairs $(g^s,h^s)$ and $(g^t,h^t)$ are distinct since if not, then $g^s=g^t$ and $h^s=h^t$. This means that $st^{-1}\in Z(g)\cap Z(h)$ which is not possible since $s$ and $t$ are in distinct cosets of $Z(g)\cap Z(h)$. Therefore, for each distinct coset representative $s$ of $Z(g)\cap Z(h)$ in $Z(gh)$, we get a distinct pair $(g^s,h^s)$ such that $g^s h^s= g h$. The same argument holds for any non-trivial $d$. To show that these are all the possible pairs, we make use of the above theorem. When $\rho$ and $\sigma$ are both trivial, equation~\eqref{CG1} takes the form
\[
([g],\text{tr})\otimes ([h],\text{tr}) \cong \bigoplus_d \left(\text{tr}\uparrow_{Z(g)\cap Z(h^d)}^{Z(gh^d)}\right) \uparrow^G\,.
\]
This shows that the number of times $[gh^d]$ appears in $[g]\cdot [h^d]$ is the index of $Z(g)\cap Z(h^d)$ in $Z(gh^d)$. 

With all this in hand, we can determine the Clebsch-Gordan decomposition. We can re-write equation~\eqref{CG1} to get
\[
\rho\uparrow^G\otimes \sigma\uparrow^G \cong \bigoplus_d \left(\left(\rho\downarrow_{Z(g)\cap Z(h)^d} \otimes \sigma\downarrow_{Z(g)\cap Z(h)^d}\right)\uparrow^{Z(gh^d)}\right)\uparrow^G \,,
\]
by taking the induction in two stages. Now suppose that $\left(\rho\downarrow_{Z(g)\cap Z(h)^d} \otimes \sigma\downarrow_{Z(g)\cap Z(h)^d}\right)\uparrow^{Z(gh^d)}$ breaks up into irreps of $Z(gh^d)$ as
\[
\left(\rho\downarrow_{Z(g)\cap Z(h)^d} \otimes \sigma\downarrow_{Z(g)\cap Z(h)^d}\right)\uparrow^{Z(gh^d)} \cong \bigoplus_{\mu\in \widehat{Z(gh^d)}} m_{\rho,\sigma,\mu}\mu\,,
\] 
where $m_{\rho,\sigma,\mu}$ is the multiplicity of the irrep $\mu$. We now obtain the Clebsch-Gordan decomposition as 
\begin{equation}\label{CG2}
\rho\uparrow^G\otimes \sigma\uparrow^G \cong \bigoplus_d \bigoplus_{\mu\in \widehat{Z(gh^d)}} m_{\rho,\sigma,\mu}(\mu\uparrow^{G}) \,.
\end{equation}
Since this decomposition is obtained by considering the action of $G$ alone, we need to check if the action of $h^\ast$ is consistent with it. We do this in the next section as we develop the transform.

\subsection{Clebsch-Gordan transform}
\subsubsection{Fluxon irreps}
We first describe the Clebsch-Gordan transform for irreps of $\D(G)$ of the type $([g],\text{tr})$ (fluxon irreps). For this case, we give an efficient transform for any finite group. For two fluxon irreps, the Clebsch-Gordan decomposition states that
\[
([h],\text{tr})\otimes ([g],\text{tr}) \cong \bigoplus_d \left(\text{tr}\uparrow_{Z(g)\cap Z(h^d)}^{Z(gh^d)}\right) \uparrow^G\cong \bigoplus_d\bigoplus_{\mu\in\widehat{Z(gh^d)}}m_{\text{tr},\text{tr},\mu}(\mu\uparrow^G)\,.
\]

We give the transform in two steps, one for each of the above two isomorphisms. In the first step, we need to convert from the basis $\ket{h^a,g^b}$ to $\ket{(gh^d)^k,t}$, where $t$ is an element of the transversal $T_d$ of $Z(g)\cap Z(h^d)$ in $Z(gh^d)$. Once we fix the double coset representatives $d$ and the transversal $t$ (for each $d$), this transformation is straightforward. Given $\ket{h^\prime,g^\prime}$, we determine their product and the double coset representative $d$ that this product belongs to. Recall that for each element $t$ in $T_d$, we have that $g^t (h^d)^t= g h ^d$ and that each such $t$ gives a distinct pair $(g^t, (h^d)^t)$. Therefore, we can determine $t$ after ordering the pairs. This gives us the first transformation. It is also clear from this that the co-algebra action is consistent across the both sides. Indeed, the action of any $c^\ast$ on a state of the form $\ket{t_1,v_1}\otimes\ket{t_2,v_2}$ is
\[
c^\ast (\ket{t_1,v_1}\otimes\ket{t_2,v_2})=\sum_{h_2h_1=c}h_1^\ast\ket{t_1,v_1}\otimes h_2^\ast\ket{t_2,v_2} = \delta_{\{t_2t_1=c\}}\ket{t_1,v_1}\otimes\ket{t_2,v_2} \,.
\]
Under the Clebsch-Gordan transform, this state is taken to $\ket{t_2t_1,v_3}$, where $v_3$ is a state determined by $v_1$ and $v_2$. This state is given by the second transformation described below. Therefore, the action of $c^\ast$ is 
\[
c^\ast\ket{t_2t_1,v_3}=\delta_{\{t_2t_1=c\}}\ket{t_2t_1,v_3}\,.
\]
This shows that the coalgebra action is consistent with this decomposition.

For the second transformation, we need to block diagonalize the induced representation of the trivial irrep of $Z(g)\cap Z(h^d)$ to $Z(gh^d)$. We show how to do this if we can perform efficient QFT over both groups. Suppose that $\rho$ is an irrep of a group $B$ which is a subgroup of the group $A$ and suppose that we can perform a QFT over $A$ and $B$, then we show how to use these QFTs to give an efficient way to block diagonalize the induced representation $\rho \uparrow^A$. We first embed the induced representation into $\C[A]$ in the following way. The induced representation consists of vectors of the form $\ket{t,v}$, where $t$ is an element of the transversal of $B$ in $A$ and $v$ is a vector in the irrep space $\rho$. We embed $\ket{v}$ into $\C[B]$ by taking an ancilla of size $|B|/d_\rho$, where $d_\rho$ is the dimension of $\rho$. This is possible if we know how to perform the QFT over $\C[B]$. Now that we have an embedding into $\C[A]$, we can perform the QFT over $A$ and obtain a basis $\ket{\rho,i,j}$ where only some irreps $\rho$ appear. Since we know which ones do not appear at all, we can label the irreps of $A$ such that the state is $\ket{0}\otimes \ket{\rho,i,j}$ where the first register is of size $|H|/d_\rho$. We can now discard this register and we obtain the required block diagonalization. Using this technique when $\rho$ is a trivial representation of $B=Z(g)\cap Z(h^d)$ and $A=Z(gh^d)$ gives us a way to decompose the induced representation from the trivial representation of a subgroup. This gives us the second transformation and completes the Clebsch-Gordan transform for fluxon irreps. We have thus shown that if we can perform an efficient QFT over $Z(g)$ and over $Z(g)\cap Z(h)$ for all $g,h\in G$, then we can perform an efficient Clebsch-Gordan transform over $\D(G)$.

\subsubsection{General irreps}
For the more general irreps, we show that one can perform efficient Clebsch-Gordan transform if we can
\begin{enumerate}
\item perform QFT and CG transforms over $Z(h)$ and $Z(g)\cap Z(h)$ for all $g,h\in G$ and,
\item block diagonalize irreps of centralizers restricted to intersections of centralizers.
\end{enumerate}
The procedure can be split into three steps in the following way.
\begin{align}
([h],\rho)\otimes ([g],\sigma) &\cong \bigoplus_d \left(\left(\rho\downarrow_{Z(g)\cap Z(h^d)}\otimes\sigma\downarrow_{Z(g)\cap Z(h^d)} \right)\uparrow^{Z(gh^d)}\right) \uparrow^G \nonumber \\
&\cong \bigoplus_d \left(\left(\bigoplus_{\nu\in \widehat{Z(g)\cap Z(h^d)}} n_{\nu}\nu \right)\uparrow^{Z(gh^d)}\right) \uparrow^G \nonumber \\
&\cong \bigoplus_d\bigoplus_{\mu\in\widehat{Z(gh^d)}}m_{\mu}(\mu\uparrow^G)\,.
\end{align}
In the above, $n_\nu$ is the multiplicity of $\nu$ (an irrep of $Z(g)\cap Z(h)$) in the tensor product decomposition of $\rho$ and $\sigma$ restricted to $Z(g)\cap Z(h^d)$ and $m_\mu$ is the multiplicity of $\mu$ (an irrep of $Z(gh^d)$) when $\nu$ is induced to $Z(gh^d)$ and decomposed into irreps of $Z(gh^d)$. These multiplicities depend on $\rho$, $\sigma$ and $d$ in general.
\begin{enumerate}
\item The first step takes us from the basis $\ket{h^a,v_1,g^b,v_2}$ to the basis $\ket{(gh^d)^k, t, v_1,v_2}$, where $v_1$ and $v_2$ are vectors in the irrep spaces of $\rho$ and $\sigma$. This step is the same as in the previous case (for fluxon irreps) since we do not operate on the vectors $v_1$ and $v_2$. The transversal is picked in the way described above. 
\item The second step can be done if we know how to decompose any irrep of $Z(g)$ and $Z(h^d)$ into irreps of $Z(g)\cap Z(h)^d$ and then perform Clebsch-Gordan transform over the group $Z(g)\cap Z(h)^d$. This step may be done for particular groups.
\item The third step can be done using the procedure described above for decomposing induced irreps, since $\nu$ is an irrep of $Z(g)\cap Z(h^d)$ (the group $B$ above) and it is induced to $Z(gh^d)$ (the group $A$ above). 
\end{enumerate}
This gives an efficient algorithm for the CG transform under certain conditions. Next, we show how these conditions are satisfied for $\D(\mathbb{Z}_p\rtimes\mathbb{Z}_q)$ which is sufficient for universal quantum computation \cite{Mochon2}.

\subsection{Fourier and Clebsch-Gordan transforms over \texorpdfstring{$\D(\mathbb{Z}_p\rtimes\mathbb{Z}_q)$}{metabelian groups}}
It is shown in \cite{Mochon2}, that the group $G=\mathbb{Z}_p\rtimes\mathbb{Z}_q$ can be used to perform universal quantum computation when $p$ and $q$ are prime and $q|(p-1)$. Here we show how to perform the Fourier and Clebsch-Gordan transforms over $\D(G)$. First, in order to fully describe the group, we have to pick the homomorphism from $\mathbb{Z}_q$ to $\text{Aut}(\mathbb{Z}_p)$. Such homomorphisms are characterized by elements $\alpha$ such that $\alpha^q=0$ mod $p$. Having picked such an $\alpha$, we see that the group multiplication in $G$ is $(a_1,b_1)(a_2,b_2)=(a_1+a_2 \alpha^{b_1},b_1+b_2)$. Notice that when $\alpha=1$, we obtain the direct product of $\mathbb{Z}_p$ and $\mathbb{Z}_q$.

We now describe its irreps. There are $q$ one dimensional irreps and $(p-1)/q$, $q$ dimensional irreps. The one dimensional irreps are obtained as the extension of the trivial irrep of $\mathbb{Z}_p$ to $G$ and tensored with each of the $q$ irreps of $\mathbb{Z}_q$. The $q$ higher dimensional irreps are obtained as induced representations from a non-trivial irrep of $\mathbb{Z}_p$. The irreps of $\mathbb{Z}_p$ are characterized by $k$ and are of the form $\exp(2\pi ika/p)=\omega_p^{ka}$ for a group element $a$. All those $k$ in the orbit of $k\alpha^b$ for $b\in \mathbb{Z}_q$ will be induced to the same irrep of $G$. The induction can be given as
\[
\rho_k(a,b) = \sum_{s\in \mathbb{Z}_q} \omega_p^{ka\alpha^{-s}} \ket{s}\bra{s-b}\,.
\]

We now describe the centralizers of $G$. The centralizer of the identity element $(0,0)$ is $G$. The centralizer of any element $(a,b)$, where $a\neq 0$ is $\mathbb{Z}_p$. The centralizer of $(0,b)$, where $b\neq 0$ is $\mathbb{Z}_q$.

The Fourier transform is now easy to construct. Notice that since two of the centralizers are abelian, their QFT is efficient. For the QFT over $G$, first take the group basis $\ket{a,b}$ and convert it into the basis $\ket{z,t}$, where $z\in \mathbb{Z}_p$ and $t$ is a transversal which can be picked to be an element of $\mathbb{Z}_q$. Now perform a QFT over $\mathbb{Z}_p$ in the first register. Then conditioned on the value in the first register perform a second transform as follows. If the value in the first register is a non-trivial irrep of $\mathbb{Z}_p$, then do nothing since we already have an irrep of $G$. If the first register has a trivial irrep of $\mathbb{Z}_p$, then perform a QFT over $\mathbb{Z}_q$ in the second register. This gives us a QFT over $G$. Since we can now perform efficient QFTs over all the centralizers, we can perform efficient QFT over $\D(G)$.

For the Clebsch-Gordan transform, as the first condition, we need to perform CG transforms over all centralizers and their intersections. Since all intersections and two of the centralizers are abelian, their CG transforms are efficient. For the CG transform over $G$, we only have to consider the case of high dimensional irreps. For these irreps, the tensor product looks like
\[
(\rho_k\otimes\rho_l) (a,b) = \sum_{s,t} \omega_p^{a(k\alpha^{-s}+l\alpha^{-t})}\ket{s,t}\bra{s-b,t-b}\,.
\]
The Clebsch-Gordan transform, if $k+l\neq 0$, is
\[
\ket{s,t}\longrightarrow\ket{t-s,t}\,.
\]
This produces the state
\[
\sum_{s,t} \omega_p^{a(k\alpha^{-t+s}+l\alpha^{-t})}\ket{s,t}\bra{s,t-b} = \sum_{s,t} \omega_p^{a\alpha^{-t}(k\alpha^{s}+l)}\ket{s,t}\bra{s,t-b}\,,
\]
which is a direct sum of $q$ irreps $\rho_{k\alpha^{s}+l}$ for $s\in\mathbb{Z}_q$. Now, if $k+l=0$, then again we first perform the above transform to get
\[
\sum_{s,t} \omega_p^{a\alpha^{-t}(k\alpha^{s}-k)}\ket{s,t}\bra{s,t-b}\,.
\]
Notice that when $s=0$, this is the regular representation of $\mathbb{Z}_q$. Therefore, we have to perform a conditional QFT (conditioned on $s=0$) additionally. Then we would obtain a direct sum of the $q$ one dimensional irreps and $q-1$ high dimensional irreps $\rho_{k(\alpha^{s}-1)}$ for $s\in \mathbb{Z}_q$, $s\neq 0$. This completes the CG transform over $G$. We can now perform QFT and CG transforms over centralizers and intersections of centralizers.

To complete the conditions for CG transform over $\D(G)$, we need to block diagonalize $\rho_k$ restricted to intersections of centralizers i.e., block diagonalize $\rho_k$ when restricted to $\mathbb{Z}_p$ and $\mathbb{Z}_q$. From the structure of $\rho_k$, it is easy to see that when restricted to $\mathbb{Z}_p$, it is already diagonal and when restricted to $\mathbb{Z}_q$, it is the regular representation of $\mathbb{Z}_q$. Therefore, a QFT over $\mathbb{Z}_q$ would diagonalize it. Thus we can perform efficient QFT and CG transforms over $\D(G)$.

\section{Conclusions}\label{conclusions}
In this paper, we gave an efficient circuit for the quantum Fourier transform over $\D(G)$, the quantum double of a finite group $G$ and show how to apply it to $\D(S_n)$. We used this circuit to give efficient algorithms for approximating link invariants. We then showed some hardness results for approximating and exact evaluation of link invariants arising from $\D(G)$. We showed that additive approximations of link invariants arising from irreps of $\D(G)$ are $\BPP$-hard and multiplicative approximations are $\SBP$-hard and exact evaluations are $\sP$-hard. We also gave an efficient randomized algorithm to additively approximate the link invariants when the conjugacy class (or fluxon) irrep is used. This shows that for the fluxon irrep the problem is $\BPP$-complete. We then gave an efficient circuit for the Clebsch-Gordan transform for fluxon irreps of $\D(G)$ and, under certain conditions, for general irreps. We gave an example of a quantum group, namely $\D(\mathbb{Z}_p\rtimes\mathbb{Z}_q)$ (which is powerful enough to do universal quantum computation) for which we show how to perform the QFT and CG transforms. We also showed how to simulate topological quantum computation inside exponentially large irreps of $\D(G)$ efficiently using the Fourier and Clebsch-Gordan transforms. 

In our hardness results, we need that the size of $G$ be constant. An immediate question is how to extend this to asymptotically growing group sizes. One way is to make the procedure to kill unwanted solutions in our proof more efficient. Another question is - to which groups can these hardness results be extended. It is possible that the hardness results proved here for $A_n$ are also true for certain non-solvable groups which are not simple. For example, it is immediately true for $S_n$ by restricting to conjugacy classes which are also in $A_n$. But it could be true for a more general class of non-solvable groups. It would also be interesting to develop the Clebsch-Gordan transform over other groups $G$ for arbitrary irreps of $\D(G)$. This would be useful from the point of view of simulation of anyons and for the development of quantum circuits. Finally, it would be interesting to determine the power of a $\D(G)$ computer where we do not use post-selection (since with post-selection, it is universal for quantum computation for certain groups). Another interesting  question regarding post selection is, what class of operations or gates can be extended to universal quantum computation when post selection is used.

\section*{Acknowledgements} 
We thank Greg Kuperberg for useful discussions, especially for telling us about the class $\SBP$ and for suggesting that multiplicative approximations to $\D(G)$ invariants might be $\SBP$ complete. We also thank one of the anonymous referees of QIP for valuable suggestions. We thank Gorjan Alagic for telling us about \cite{RW} and \cite{FRW}. We acknowledge the support of NSF grants 1117427 and 0829917 and ARO contract W911NF-04-R-0009 at the University of Connecticut.

\bibliography{qd}

\end{document}